\pdfoutput=1


\documentclass{article}  
\usepackage{arxiv}
\usepackage{amsthm}

\usepackage{flushend}

\usepackage[usenames, dvipsnames]{color}

\usepackage[pdfa,hidelinks]{hyperref}
%
%





\usepackage{optidef}
\usepackage{graphics} 
\graphicspath{{figures/}}
\usepackage{amsmath} 
\usepackage{amssymb}  

\usepackage{verbatim}

     
\newcounter{rmnum}
\newenvironment{romannum}{\begin{list}{{\upshape (\roman{rmnum})}}{\usecounter{rmnum}
\setlength{\leftmargin}{2pt}
\setlength{\rightmargin}{4pt}
\setlength{\itemsep}{1pt}
\setlength{\itemindent}{5pt}
}}{\end{list}}

\def\Ebox#1#2{%
\begin{center} 
\includegraphics[width= #1\hsize]{#2}\end{center}}

\def\bl#1{{\color{blue}#1}}
\def\rd#1{{\color{red}#1}}

\newlength{\noteWidth}
\setlength{\noteWidth}{.7in}
\long\def\notes#1{\ifinner
             {\tiny #1}
             \else
             \marginpar{\parbox[t]{\noteWidth}{\raggedright\tiny #1}}
             \fi}
             
                       \def\notes#1{}
\def\archive#1{}

\def\spm#1{\notes{spm:  #1}}
\def\rwm#1{\notes{\bl{rwm:  #1}}}

\def\jjm#1{\notes{\rd{jjm:  #1}}}

\def\state{{\sf X}}

\def\MC{\text{\MC}}

\def\transpose{{\hbox{\it\tiny T}}}

\def\argmin{\mathop{\rm arg\, min}}

\def\argmax{\mathop{\rm arg\, max}}

\newcommand{\field}[1]{\mathbb{#1}}

\def\Re{\field{R}}

\def\eqdef{\mathbin{:=}}

\newtheorem{theorem}{Theorem}[section]
\newtheorem{corollary}[theorem]{Corollary}
\newtheorem{proposition}[theorem]{Proposition}
\newtheorem{lemma}[theorem]{Lemma}

\usepackage{cleveref}




\Crefname{corollary}{Corollary}{Corollaries}
\Crefname{eqnarray}{eq.}{eqs.}
\Crefname{equation}{eq.}{eqs.}

\Crefname{figure}{Fig.}{Figs.}
\Crefname{tabular}{Tab.}{Tabs.}
\Crefname{table}{Tab.}{Tabs.}
\Crefname{lemma}{Lemma}{Lemmas}

\Crefname{theorem}{Thm.}{Thms.}
\Crefname{definition}{Definition}{Definitions}
\Crefname{section}{Section}{Sections}
\Crefname{proposition}{Prop.}{Propositions}
\Crefname{assumption}{Assumption}{Assumptions}
\Crefname{example}{Example}{Examples}

\def\Theorem#1{Theorem~\ref{#1}}

\def\Section#1{Section~\ref{#1}}


\def\barell{{\overline {\ell}}}


\def\bfmath#1{{\mathchoice{\mbox{\boldmath$#1$}}%
{\mbox{\boldmath$#1$}}%
{\mbox{\boldmath$\scriptstyle#1$}}%
{\mbox{\boldmath$\scriptscriptstyle#1$}}}}

\def\bfmg{\bfmath{g}}

\def\bfmx{\bfmath{x}}

\def\bfmz{\bfmath{z}}

\def\bfnu{\bfmath{u}}

\def\bfgamma{\bfmath{\gamma}} 
\def\bfvarrho{\bfmath{\varrho}}

\def\bflambda{\bfmath{\lambda}}
\def\bfbeta{\bfmath{\beta}}

   
\def\bfmg{g}

\def\bfmx{x}
\def\bfmz{z}

\def\bfnu{u}   

\def\bfvarrho{\varrho}

\def\bfgamma{\gamma}
\def\bfvarrho{\varrho}

\def\bflambda{\lambda}
\def\bfbeta{\beta}

\def\bfmell{\bfmath{\ell}}
\def\bfmell{\ell}



 \def\FRAC#1#2#3{\genfrac{}{}{}{#1}{#2}{#3}}

\def\ddt{{\mathchoice{\FRAC{1}{d}{dt}}%
{\FRAC{1}{d}{dt}}%
{\FRAC{3}{d}{dt}}%
{\FRAC{3}{d}{dt}}}}

\def\dddt{{\mathchoice{\FRAC{1}{d^2}{dt^2}}%
		{\FRAC{1}{d^2}{dt^2}}%
		{\FRAC{3}{d^2}{dt^2}}%
		{\FRAC{3}{d^2}{dt^2}}}}

\def\ddzp{{\mathchoice{\FRAC{1}{\partial }{\partial z}}%
		{\FRAC{1}{\partial }{\partial z}}%
		{\FRAC{3}{\partial }{\partial z}}%
		{\FRAC{3}{\partial }{\partial z}}}}
	
	\def\ddxp{{\mathchoice{\FRAC{1}{\partial }{\partial x}}%
			{\FRAC{1}{\partial }{\partial x}}%
			{\FRAC{3}{\partial }{\partial x}}%
			{\FRAC{3}{\partial }{\partial x}}}}

\def\ddnup{{\mathchoice{\FRAC{1}{\partial }{\partial u}}%
		{\FRAC{1}{\partial }{\partial u}}%
		{\FRAC{3}{\partial }{\partial u}}%
		{\FRAC{3}{\partial }{\partial u}}}}


\def\clL{{\cal L}}

\def\clT{{\cal T}}
\def\clU{{\cal U}}

\def\cX{c_{\text{\tiny X}}}
\def\cG{c_{\text{g}}}
\def\cdG{c_{{\text{\lower1pt\hbox{d}}}} }

\def\As#1{\noindent
	\textbf{(#1)}  \ \ }

\def\MCav{\text{MC}_g^\text{avg}}
\def\MVav{\text{MV}_i^\text{avg}}
\def\varrhoav{\varrho^\text{avg}}

\usepackage{bbold} 

\RenewEnviron{BaseMiniExclam}[7]{%
	\selectConstraintMult{#1}%
	\begin{subequations}
		\ifthenelse{\equal{#7}{b}}{\allowdisplaybreaks}%
		#4
		\begin{alignat}{5}
		\bodyobj{#2}{#3}{#6}{#5}
		\BODY
		\end{alignat}
	\end{subequations}%
	\setStandardMini
}

\makeatletter
\def\thanks#1{\protected@xdef\@thanks{\@thanks
		\protect\footnotetext{#1}}}
\makeatother
\title{\LARGE \bf
State Space Collapse in
 \\
 Resource Allocation for Demand Dispatch
}

\author{Joel Mathias \\ Dept. ECE \\ Univ. of Florida\\
	\And Robert Moye \\ Dept. ECE, Univ. of Florida\\Rainbow Energy Marketing Corp.\\
	\And Sean Meyn \\ Dept. ECE\\ Univ. of Florida \\
	\And Joseph Warrington \\ Automatic Control Laboratory \\ETH Zurich
\thanks{$^*$This is  a preprint of the conference paper  \cite{matmoymey19},  to appear in the IEEE CDC,  December 2019.
Funding from the National Science Foundation under awards EPCN 1609131 \&\ CPS~1646229 is gratefully acknowledged.   Thanks also  to funding from  the State of Florida,  through a REET (Renewable Energy and Energy Efficiency Technologies) grant. Many thanks to Prof.\ Frank Kelly who suggested we survey the history in telecommunications economics to investigate parallels with the power industry. Simon's Institute, Berkeley, is gratefully acknowledged for hosting SM and JW in Spring 2018:
discussions during  the semester long program on \textit{Real Time Decision Making} served as   inspiration for this paper.  }
\date{\vspace{-5ex}}
}

\begin{document}

\newcommand*{\QED}{\hfill\ensuremath{\blacksquare}}
\def\qedIEEE{\nobreak\hspace*{\fill}~\QED\par \unskip }
\def\ProofOf#1{\smallbreak\noindent\textit{#1:} \  }

\maketitle
\thispagestyle{empty}

\begin{abstract}

\textit{Demand dispatch} is the science of extracting virtual energy storage through the automatic control of deferrable loads to provide balancing or regulation services to the grid, while maintaining consumer-end quality of service.

The control of a large collection of heterogeneous loads is in part a resource allocation problem, since different classes of loads are more valuable for different services. 


The goal of this paper is to unveil the structure of the optimal solution to the resource allocation problem, 
and investigate short-term market implications. It is found that the marginal cost for each load class evolves in a  \textit{two-dimensional subspace}: spanned by a co-state process and its derivative.
	

The resource allocation problem is recast to construct a \textit{dynamic competitive equilibrium model}, in which the consumer utility is the negative of the cost of deviation from ideal QoS. It is found that a competitive equilibrium exists with the equilibrium price equal to the negative of an optimal co-state process. Moreover, the equilibrium price is different than what would be obtained based on the standard assumption that the consumer's utility is a function of power consumption.   


\end{abstract}



\section{Introduction}
\label{s:intro}

The goals of this paper are twofold:  1)  to analyze the structure of the optimal solution to the resource allocation problem investigated in  \cite{cammatkiebusmey18},  and 2) to develop an understanding of the potential implications to market design.  

\subsection{Control techniques for demand dispatch }

The term \textit{demand dispatch} refers to the creation of virtual energy storage from deferrable loads.   The key to success is automation: an appropriate distributed control architecture ensures that bounds on quality of service (QoS) are met and simultaneously ensures that the loads provide aggregate grid services comparable to a large battery system.

The 2018 IMA volume on the control of energy markets
and grids contains several papers surveying distributed control techniques for demand dispatch 
\cite{chehasmatbusmey18,cheche17b,almesphinfropauami18}.    The present work and \cite{cammatkiebusmey18} are based on the results surveyed in \cite{chehasmatbusmey18}:   through distributed control, a heterogeneous population of loads such as residential water heaters can be controlled in such a way that  quality of service (QoS)---in terms of temperature and cycling---of each load obeys strict constraints,  while the power deviation can be adjusted up and down to provide grid services much like a battery system.

For loads such as water heating,  air-conditioning, and refrigeration (examples of thermostatically controlled loads, or TCLs), there is a natural analog with batteries, with thermal energy storage replacing electrical energy storage.   This storage is what allows large deviations in electric power consumption of the fleet, with imperceptible impact on service to consumers.  Ample evidence of this potential is presented in  \cite{cammatkiebusmey18, chehasmatbusmey18} and their references.  

The next question concerns the management of a large heterogeneous population of loads.   When heterogeneity is not large (e.g., each load in the collection is a residential water heater, but the size varies across the population),  then additional local control at each load can be designed to make the population appear homogeneous \cite{matbusmey17,matkadbusmey16}.   The problem addressed in this paper and in   \cite{cammatkiebusmey18} is control of a highly diverse population of loads.  Along with residential and commercial TCLs,  these might include the residential pool pumps in a region, along with water pumping for irrigation or waste management.  Control of the fleet is a dynamic resource allocation problem, which is formulated as a linear program in \cite{trotinstr16} and as a convex program in  \cite{cammatkiebusmey18}.

The convex program is revisited here.  The goal is to gain insight on the structure of the optimal solution for a model with one source of traditional generation  and grid services obtained through demand dispatch of a large population of loads.  It is assumed that there are $M$ classes of loads;  in each class, the population is homogeneous.   
  
The QoS for the $i$th load class at time $t$ is a functional of the state of charge (SoC) $x_i(t)$ used in battery models for thermostatically controlled loads (TCLs)   \cite{haosanpoovin15} and residential pools    \cite{chebusmey14,meybarbusyueehr15}.    The SoC is assumed to evolve according to the linear system,
\begin{equation}
\ddt x_i(t)   = -\alpha_i x_i(t)   - z_i(t),    
\label{e:SoC_ODE}
\end{equation}
in which $-z_i(t)$ is power deviation at time $t$. 
Denoting the derivative by
\begin{equation}
u_i(t) \eqdef \ddt z_i(t),   
\label{e:ramp_nu}
\end{equation}
the \Cref{e:SoC_ODE,e:ramp_nu} constitute a linear dynamical system with state $(\bfmx, \bfmz)$ and control input $\bfnu$.

For a given input $u$,  the resulting QoS for load class $i$ is quantified by the integral of cost:
\begin{equation}
\text{QoS}_i(u) =\int_0^{\clT}    c_i(x_i(t))  \, dt,   
\label{e:AggQoS}
\end{equation}
where each $c_i\colon\Re\to\Re_+$ is a strongly convex function. 
If this is small, then the aggregate SoC is small in an average sense.   This is a necessary condition for each of the loads in the $i$th class to remain within individual QoS bounds.   Simulation studies and analysis indicate that bounds on this aggregate QoS are also sufficient to ensure that a high percentage of the population will remain within individual target QoS levels,  subject to the homogeneity of the load collection, the control architecture,  the bounds on QoS,  and other factors \cite{chehasmatbusmey18, haosanpoovin15}.
\rwm{is "analys\underline{e}s" better than "analys\underline{i}s" here?}

As in   \cite{cammatkiebusmey18},  a finite-horizon optimal control problem over $[0,\clT]$ is considered,  with state cost given by 
\begin{equation}
\cX(x) =  \sum_{i=1}^{M}   c_i(x_i)  \,, \quad x\in\Re^M\,.
\label{e:cX}
\end{equation}  
The term \textit{state space collapse} comes from the literature on stochastic networks \cite{rei84b,CTCN}, which may be regarded as a special case of the model reduction obtained using singular perturbation methods \cite{sakorekok84}.   

The collapse demonstrated in this paper is obtained through the special structure of the dynamics of load and generation.
 The \textit{descriptor dynamics} associated with the $M$-dimensional  SoC process are obtained from    \eqref{e:SoC_ODE}:
\begin{equation}
  \ddt x_\sigma (t) = -\alpha^\transpose x(t)  -  z_\sigma(t), 
\label{e:descriptor}
\end{equation} 
in which  $x_\sigma(t)=\sum x_i(t)$,    $z_\sigma(t)=\sum z_i(t)$.   The evolution of  the aggregate power deviation is similar but simpler:
\[
\ddt z_\sigma (t) = u_\sigma(t).
\]
What is crucial here is that the individual inputs $\{u_i(t)\}$ are not subject to individual cost constraints: it is only the sum that is subject to a cost indirectly, through the cost on generation ramping.  Consequently, the optimal control formulations considered in this paper   fall in the category of ``cheap optimal control'' \cite{sakorekok84, fra79, hausil83}.    

For a cost that is quadratic in $(x,z_\sigma, u_\sigma)$, it follows from the main result of  \cite{fra79} that the solution to the infinite-horizon optimal control problem  has a simple form:   There is a one-dimensional subspace $\state^\star$ such that $x^\star(t)\in\state^\star$ for all $t > 0$. These results are extended to include the finite-time horizon optimal control problem in \cite{hausil83}; the optimal control evolves smoothly following a jump at time $t=0$ \cite[Theorem 5.8]{hausil83}. 

The findings in the non-quadratic, finite time-horizon setting of this paper are equally remarkable:  
the \textit{marginal costs} evolve in a two-dimensional subspace generated by a co-state process $\bflambda^\star$ and its derivative;  for $t>0$:
\begin{equation}
c'_i\,   (x^\star(t)) = \alpha_i \lambda^\star(t) - \ddt \lambda^\star(t).   
\label{e:MC}
\end{equation} 
Consequently, optimal SoC for each of the $M$ load classes evolves in a two-dimensional manifold. 
These   results are surveyed in \Section{s:collapse}.

	
\subsection{Implications for markets}

\Section{s:idiots} reviews dynamic competitive equilibrium theory, and explains that the negative of the co-state function may be interpreted as a price in a competitive equilibrium.   Extensions of results from  \cite{chomey10, wankownegshamey10} are obtained, relating average marginal cost of generation and average marginal value of load classes to average prices.   An example of the competitive equilibrium price for a model in which the QoS cost functions are all quadratic is shown in 	\Cref{f:idiotsprice}.    In this example, the net-load is piecewise constant,  and the resulting optimal generation is relatively smooth.  The price is also smooth, and anticipates the surge in load well before its occurrence.   The details of the simulation are contained in \Cref{s:numerics}.
\rwm{are fWH, sWH, Fr and PP in Fig 1 defined at this point?}

The results in \Section{s:idiots} close a logical gap in the prior work  \cite{chomey07,chomey10, wankownegshamey10,kizman10b,zavani11a} (and many others),  in which  the utility function for the consumer is assumed to be a concave function of power, rather than a natural metric such as the QoS  \eqref{e:AggQoS}.   
\spm{Sep 8:   I should find a paper by Wolak or Hobbs or Hogan.   Wolak always tells me that he was way ahead of me [he had established the conclusions of chomey10 years prior], but he never shared any evidence.   
}

Consumer QoS is included in the formulation of the competitive equilibrium analysis in \cite{lichelow11},  but in this prior work, it is assumed that power consumption from each residential load can be varied continuously; however, the residential loads considered for demand-side management are typically ON/OFF devices. In the present paper, resource allocation is performed over load aggregations modeled as virtual batteries.  Consequently, the SoC and power trajectories can be assumed to be smooth---see \cite{matbusmey17, meybarbusyueehr15,chebusmey17a}, which discuss the mean-field characteristics of load aggregations under demand dispatch.

\begin{figure}
	\centering
	\vspace{0.5em}
\includegraphics[width= .8\hsize]{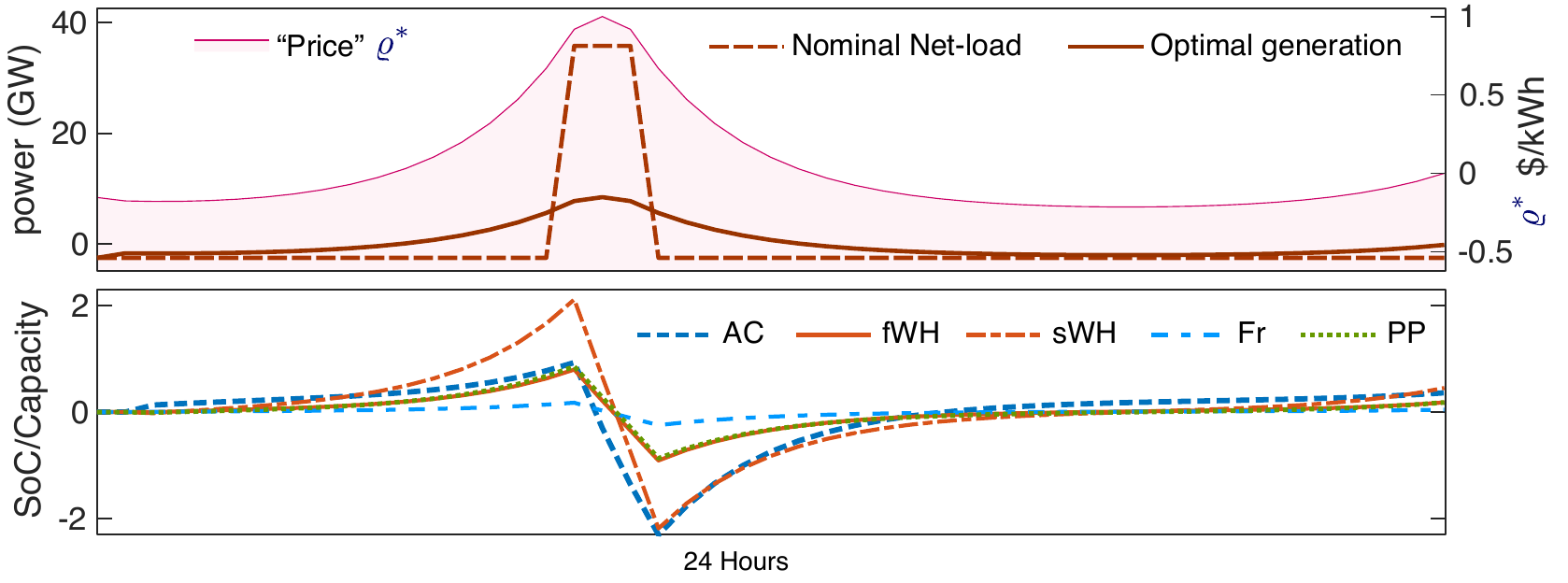}
	\vspace{-1.25em}
	\caption{The competitive equilibrium price $\rho^\star=-\lambda^\star$ when the nominal net-load $\ell$ is piecewise constant. 
	 The quantity and distribution of loads were chosen based on California's usage  \cite{cammatkiebusmey18,matdyscal15}. 
	 The residential air conditioners and commercial water heaters exceeded capacity bounds (by a few degrees for a few hours). 
 The optimal generation $g^\star$ is nearly constant, despite the 40GW load surge.  }
	\label{f:idiotsprice} 
		\vspace{-1.25em}
\end{figure}

\textit{These economic findings should be viewed with caution.}  It would be naive to think that real-time control can be achieved using the price signals discovered in \Section{s:idiots}.     It is typically assumed that price discovery occurs through an auction.  Do we believe these non-causal prices will emerge from an auction?    
In the experiments surveyed in \Cref{s:numerics}, the cost functions are modified so that the consumers experience no loss of QoS, and  the resulting average price is nearly zero.  How then can these prices provide any incentive for participation?

The resource allocation problem introduced in \cite{cammatkiebusmey18} and investigated in this paper is intended to be part of a model predictive control (MPC) architecture,  while control through price signals  is inherently open loop.    Imagine attempting to apply MPC using price signals?   The aggregators managing the loads would not be able to predict re-calculated prices and might make decisions that would create personal losses or windfalls purely based on the control architecture.    This would create uncertainty in the control solution and the long-term sustainability of the grid-aggregator relationship.

For these reasons, it is assumed here that a balancing authority solves the optimal resource allocation problem,  and   control of individual loads is achieved through automation. 

There is history that supports the belief that automation rather than price signals is the most efficient and reliable way to control the grid.
Following deregulation of the industry in the 1980s, telephone companies investigated real-time pricing (alternately called ``measured rates'') for local service, based on the assumption that measured-rate pricing could substantially increase economic efficiency.  However, looking specifically at the \textit{net welfare effects} on customers, studies found that measured rates often result in efficiency losses. In addition, the telephone subscribers had difficulty understanding and coping with complicated tariff structures, and the systems required to implement them were too complex \cite{park1987}.  With the integration of distributed energy resources, which are characterized by high fixed costs and essentially zero variable costs, the grid increasingly has characteristics similar to the telecommunications industry. Further discussion on these issues is contained in \cite{lobluhinmey19}.


\smallskip

The remainder of this paper   is organized as follows.  The dynamic control problem is introduced in  \Cref{s:control}, which is based on the model introduced in   \cite{cammatkiebusmey18}.   The major departure from \cite{cammatkiebusmey18} is the relaxation of hard constraints on any resource.   This is reasonable for control outside of a major crisis. The preliminaries required for analysis of the optimal solution are introduced in \Cref{s:prelim}. The main results surveyed in \Cref{s:collapse} demonstrate state space collapse, and economic implications are contained in
\Cref{s:idiots}.   Numerical simulations are provided in \Cref{s:numerics}.   Conclusions and directions for future research are discussed in \Cref{s:con}.


\section{Preliminaries}
\label{s:prelim}

\subsection*{Notation}  
\noindent
$\clT$ : time horizon for control\\
$\ell(t)$ : net-load on $[0,\clT]$,   and 
$\barell= \clT^{-1}\int_0^{\clT}\ell(t) \, dt$  
\\
$g(t)$ : power from traditional generation; 
$\gamma(t) = \ddt g(t)$.
\\
$M$: number of load classes, indexed by $i \in \{1,...,M\}$.
\\
$x_i (t)$ : state of charge (SoC) of load class $i$ \\
$-z_i(t)$ : power deviation from load class $i$; 
$u_i(t)  = \ddt z_i(t)$.
\\
Subscript ``$\sigma$'' denotes sum, e.g.,
$x_\sigma(t) = \sum_i x_i(t) $ 
\\
For a function $K\colon \Re^3\to\Re $ we adopt
  the standard calculus notation: for $\xi \in \mathbb{Z}_+^3$,  
$$
K_\xi (r, s, t) \eqdef \bigl(\FRAC{1}{\partial }{\partial r}\bigr)^{\xi_1} \bigl(\FRAC{1}{\partial}{\partial s}\bigr)^{\xi_2} \bigl(\FRAC{1}{\partial}{\partial t}\bigr)^{\xi_3} K(r, s, t) 
$$ 

\subsection{Resource allocation control architecture}
\label{s:control}

The optimal control architecture is defined by a convex program   over the time-horizon   $[0,\clT]$: 
\begin{mini!}
	{g, \gamma, x }{   
	\int_0^{\clT}    \big[   c_g(g(t) ) + c_d(\gamma(t))  + \cX(x(t)) \bigr] \, dt }{}{}
	{\label{qp19}}{}
	\addConstraint{\ell(t)}{=g(t)+z_\sigma (t)}
	{\label{e:balancecons}}{}
	\addConstraint{\ddt g(t)}{=\gamma (t)}
	{\label{e:genrampcons}}{}
	\addConstraint{\ddt {x}_i(t)}{=- \alpha_i x_i(t) - z_i(t)}
	{\label{e:soccons}}{}
	\addConstraint{\ddt z_i(t)}{= u_i(t), \quad i\ \in \{1, ..., M\}}
	{\label{e:loadrampcons}}{}
\end{mini!}
with ${x(0), z(0) \in \Re ^M \text{ given}}$.  These equations are adapted from eq.~(14) of  \cite{cammatkiebusmey18}, but the motivations and assumptions here are different.

The analysis here allows general strongly convex and twice continuously differentiable cost on SoC and generation, but maintains the quadratic cost on ramping imposed in \cite{cammatkiebusmey18}:

\As{A1}
The net load $ {\ell}$ is $C^1$.		
The cost functions $\{c_i \} \text{ and } c_g$ are non-negative, class $C^2$,  and   strongly convex: $c_i''(x)  \ge \mu$ for some $\mu>0$ and all $i, x$.  The ramping cost is quadratic:  for fixed $\kappa>0$,
\begin{equation}
c_d(x) = \kappa x^2   ,\qquad x\in \Re.
\label{e:genCost}
\end{equation}
\qedIEEE

The objective is to minimize the cost of traditional generation, the cost on ramping of traditional generation, and the cost on SoC of load classes  subject to: (i) the constraint of balancing the net load $\bfmell$ with traditional generation $\bfmg$ and power deviation from flexible loads $\bfmz$, \eqref{e:balancecons}; (ii) the dynamics of generator ramping, \eqref{e:genrampcons}; and (iii) the dynamical constraints on load classes, \eqref{e:soccons} and \eqref{e:loadrampcons}.

%

%
%
%
%
 




The optimization problem
\eqref{qp19}  may be regarded as a fully observed, finite-horizon control problem. In order to put this in state space form, it is necessary to eliminate the algebraic constraint \eqref{e:balancecons}. The resulting state process is $[x(t), z(t)] \in\Re^{2M}$, the input is  $u(t) \in\Re^{M}$, and the cost function is obtained by eliminating $g(t)$ and $\gamma(t)$ from \eqref{qp19} using \cref{e:balancecons,e:genrampcons,e:genCost} to obtain
\begin{align}
c(x(t), z(t), u(t), t) & =   \cX(x(t)) + c_g(\ell(t) - z_\sigma(t))  +           c_{\tilde d}(u(t),t),     
\label{e:modCost} 
\\[.5em]
\text{
where}\qquad 
  c_{\tilde d}(u(t),t)   &=  \kappa \bigl[   u_\sigma(t) - \ddt\ell(t) \bigr]^2 
\label{e:singCost}
\end{align}
The total cost in  \eqref{qp19} is the integral of  \eqref{e:modCost}.  This is a singular control problem because the control cost is degenerate:  the terms involving the control cost in \eqref{e:modCost} are expressed purely in terms of the sum $u_\sigma(t)$.        It is found that this singularity   is a great benefit for obtaining   structure for the optimal control solution.

\subsection{Value functions}

For $t_0 \in [0, \clT)$,     the cost-to-go is denoted,
\begin{align}
\label{e:V1}
\begin{split}
& J^\star(x,z,t_0) \eqdef  \displaystyle \inf_{u_{[t_0, \clT]}} \int_{t_0}^{\clT}     c(x(t), z(t), u(t),t) \, dt, \\
\end{split}
\end{align}
where the infimum is over continuous $u$,   subject to
\eqref{e:balancecons}--\eqref{e:loadrampcons},   and   with $x(t_0) = x, z(t_0) = z$ given.  

\Cref{p:JeqK} asserts that the cost-to-go can be expressed purely as a function of $(x_\sigma, z_\sigma)$. This is the first evidence of state space collapse.
Denote
\begin{align}
\label{e:V2}
\begin{split}
& K^\star(x_\sigma,z_\sigma,t_0) \eqdef  \displaystyle \inf_{x^+, z^+} J^\star(x,z,t_0), \\
\end{split}
\end{align}
where the infimum is over $x^+,z^+\in\Re^M$, subject to the linear constraints $ x_\sigma^+ = x_\sigma$,  $ z_\sigma^+ = z_\sigma$.

\begin{proposition}
\label[proposition]{p:JeqK}
	The following hold under Assumption~(A1): for each $t_0 \in [0, \clT)$,
\begin{romannum}
		\item $J^\star$ is convex in $(x,z)$ and finite-valued.
		\item $K^\star(x_\sigma,z_\sigma,t_0) = J^\star(x,z,t_0)$ for each $x, z \in \Re^M$  
		\qedIEEE
	\end{romannum}
\end{proposition}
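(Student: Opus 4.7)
For part (i), convexity of $J^\star(\cdot,\cdot,t_0)$ follows from Assumption (A1) together with the affine structure of \eqref{e:soccons}--\eqref{e:loadrampcons}: the integrand $c$ in \eqref{e:modCost} is jointly convex in $(x,z,u)$, so a convex combination of two admissible triples remains admissible with cost at most the convex combination of their costs, and passing to the infimum yields convexity of $J^\star$. Finite-valuedness is obtained by exhibiting a concrete admissible control; with $u\equiv 0$, each $z_i$ is constant and each $x_i$ is bounded on the compact horizon $[t_0,\clT]$, so every term of the $C^2$ integrand is bounded and the integral is finite.

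For part (ii), the inequality $K^\star(x_\sigma,z_\sigma,t_0)\le J^\star(x,z,t_0)$ is immediate by taking $(x^+,z^+)=(x,z)$ in the defining infimum. The content of the claim is the reverse inequality, which amounts to saying that $J^\star(\cdot,\cdot,t_0)$ is constant on each fiber $\{(x^+,z^+):x^+_\sigma=x_\sigma,\ z^+_\sigma=z_\sigma\}$. The structural reason is visible in \eqref{e:modCost}--\eqref{e:singCost}: both $c_g$ and $c_{\tilde d}$ depend on $(z,u)$ only through the aggregates $(z_\sigma,u_\sigma)$, only $\cX(x)$ couples to individual state components, and the $M-1$ ``difference controls'' among the $u_i$ carry no direct cost, so an $(M-1)$-dimensional family of perturbations is available \emph{for free} to reshape the individual $x_i$ and $z_i$.

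The plan is to turn this into a short-time perturbation argument. Given any $(x',z')$ with the same sums as $(x,z)$, fix $\epsilon>0$, let $\hat u$ be a near-optimal control from $(x',z')$ with trajectory $(\hat x,\hat z)$, and construct a candidate $\tilde u$ from initial condition $(x,z)$ by setting $\tilde u=\hat u$ on $[t_0+\epsilon,\clT]$ and $\tilde u=\hat u+v$ on $[t_0,t_0+\epsilon]$, with $v$ continuous, $v_\sigma\equiv 0$, $v(t_0+\epsilon)=0$, and $v$ chosen so that $(\tilde x,\tilde z)$ meets $(\hat x(t_0+\epsilon),\hat z(t_0+\epsilon))$ at $t_0+\epsilon$. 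Because $v_\sigma\equiv 0$, the aggregates $\tilde z_\sigma$ and $\tilde u_\sigma$ agree with $\hat z_\sigma$ and $\hat u_\sigma$ throughout the short interval, so the $c_g$ and $c_{\tilde d}$ contributions coincide pointwise and the only cost discrepancy is $\int_{t_0}^{t_0+\epsilon}[\cX(\tilde x)-\cX(\hat x)]\,dt$. Even though individual $\tilde z_i$ may need to scale like $1/\epsilon$ in order to move $\tilde x_i$ by an $O(1)$ amount in time $\epsilon$, a direct estimate from $\dot{\tilde x}=-\text{diag}(\alpha)\tilde x-\tilde z$ shows that the excursion of $\tilde x$ is only $O(1)$, so the discrepancy is $O(\epsilon)$. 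Letting $\epsilon\downarrow 0$ (and shrinking the near-optimality gap of $\hat u$) yields $J^\star(x,z,t_0)\le J^\star(x',z',t_0)$; swapping the roles of $(x,z)$ and $(x',z')$ gives equality, and hence (ii). The main technical step is the small-time controllability of the error system $\dot\eta=-\text{diag}(\alpha)\eta-\zeta$, $\dot\zeta=v$ on $\{\eta_\sigma=\zeta_\sigma=0\}$ using controls $v_\sigma=0$. A Kalman-rank check is immediate here: the columns of $B$ span the $\zeta$-sector of the target subspace, and the columns of $AB$ span the $\eta$-sector, so the target subspace lies in the reachable subspace \emph{without any assumption on the entries of $\alpha$}; the continuity of $v$ and the endpoint condition $v(t_0+\epsilon)=0$ can then be secured by an explicit polynomial ansatz.
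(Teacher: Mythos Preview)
Your proposal is correct and follows essentially the same approach as the paper: both part~(i) arguments coincide, and for part~(ii) both exploit that the ``difference'' directions $v$ with $v_\sigma\equiv 0$ carry no cost in $c_g$ or $c_{\tilde d}$, so one can maneuver between states sharing the same aggregates in a short interval at $O(\epsilon)$ cost in $\cX$. The paper packages this as a separate cheap-control lemma (an explicit construction using a smooth bump that reaches $z^+$ exactly and $x^+$ only up to $O(\delta)$), then invokes Bellman's principle and the Lipschitz property of $J^\star$ from part~(i); your version instead superimposes the zero-sum perturbation on a near-optimal trajectory and aims for \emph{exact} meeting via controllability, which avoids the Lipschitz step at the price of needing a slightly more careful construction.
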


The proof of 	 \Cref{p:JeqK} and most of the  results that follow are contained in the appendix.

For a given initial condition $x,z$,  with the optimal state trajectory  $\{x^\star(t), z^\star(t)  :  0< t\le \clT \}$,   denote
\begin{align}
\begin{split}
\lambda^\star(t) & = K^\star_{1,0,0} (x_\sigma^\star(t),z_\sigma^\star(t) ,t), \\
\beta^\star(t) & = K^\star_{0,1,0} (x_\sigma^\star(t),z_\sigma^\star(t) ,t)\,,  \qquad  0< t \le \clT.
\end{split}
\label{e:lambda}
\end{align}

In addition to (A1), the following assumptions are imposed throughout the remainder of the paper:   
 \jjm{Sept 7: new text that could be added: In particular, the existence and uniqueness of the optimal solution can be shown using an extension to Filippov's theorem \cite{fil62} and the fact that the cost functional is convex, but it is assumed (A2) in this paper.
 \\
 Joe, tell us your thoughts!}

\As{A2}
	For each $t_0\in [0,\clT)$ and each initial condition $(x,z)$,    the optimal control problem admits a unique solution $\{x^\star(t), z^\star(t), u^\star(t) : t_0< t\le \clT \}$  satisfying   
\begin{romannum}
		\item[(a)]  
		$(x^\star(t), z^\star(t) )$ is $C^1$ on the semi-open interval $  (0, \clT]$. 
		\item[(b)]
		There are right hand limits at  $ t_0$, denoted
		\begin{equation}
		x^\star(t_0) = \lim_{t\downarrow t_0}  x^\star(t)\,, \qquad
		z^\star(t_0) = \lim_{t\downarrow t_0}  z^\star(t),  
		\label{e:state-star}
\end{equation}
 satisfying $x^\star_\sigma(t_0) = x_\sigma $,   $z^\star_\sigma(t_0) =z_\sigma $. 
	\end{romannum}
	

\As{A3}
		The value function $K^\star\colon\Re\times\Re\times[0,\clT]\to\Re$ is $C^1$.

\As{A4}
		 The function $\bflambda^\star$ is $C^2$.
		 
\qedIEEE

\section{State space collapse}
\label{s:collapse}

\Cref{t:main}
 unveils the structure of the optimal solution:  in particular, the $M\text{-dimensional}$ optimal state process $\bfmx^\star$ evolves on a two-dimensional manifold.
 
\begin{theorem}
	\label{t:main}
	For $t \in (0,\clT]$, the optimal solution $(x^\star(t), z^\star(t), u^\star(t), \lambda^\star(t), \beta^\star(t)) \in \Re ^{3M+2}$ is the solution to the following system of $3M + 2$ equations:
	\begin{subequations}
		\begin{align}
		\label{e:SoC}
		\ddt x^\star_i(t)   &= -\alpha_i x^\star_i(t)   - z^\star_i(t),\\
		\label{e:contro}
		\ddt z^\star_i(t) &= u^\star_i(t), \\
		\label{e:2dsub}
		\ddt \lambda^\star(t) &= - c'_i\,   (x^\star_i(t)) + \alpha_i \lambda^\star(t), \\
		\label{e:betamain}
		\ddt \beta^\star(t) &=  c'_g(\ell(t) - z^\star_\sigma(t)) + \lambda^\star(t), \\
		\label{e:rampsum}
		u^\star_\sigma(t) &= \ddt \ell(t) - \frac{1}{2 \kappa} \beta^\star(t),
		\end{align}
	\end{subequations}
	where $i \in \{1,...,M\}$, with the boundary conditions $x(0+), z(0+), \lambda^\star(\clT) = 0, \text { and }  \beta^\star(\clT) = 0$.
	\qedIEEE
\end{theorem}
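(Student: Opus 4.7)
\emph{Proof plan.} The strategy is to apply Pontryagin's Maximum Principle (PMP) to the reformulated optimization problem whose state is $(x,z)\in\Re^{2M}$, input is $u\in\Re^M$, and running cost is \eqref{e:modCost}, and then to exploit \Cref{p:JeqK} to show that the $2M$ co-states collapse onto the two scalar processes $\lambda^\star$ and $\beta^\star$ of \eqref{e:lambda}. Write the Hamiltonian
$$
H(x,z,u,\Lambda,B,t) \;=\; c(x,z,u,t) \;+\; \sum_{i=1}^M \Lambda_i\bigl(-\alpha_i x_i - z_i\bigr) \;+\; \sum_{i=1}^M B_i u_i,
$$
with co-state vectors $\Lambda,B\in\Re^M$ associated, respectively, to \eqref{e:soccons} and \eqref{e:loadrampcons}. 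Under (A1)--(A4), the value function $K^\star$ is $C^1$ and identifies the co-states through
$\Lambda_i(t) = \partial J^\star/\partial x_i = K^\star_{1,0,0}(x_\sigma^\star(t),z_\sigma^\star(t),t)$
and $B_i(t) = \partial J^\star/\partial z_i = K^\star_{0,1,0}(x_\sigma^\star(t),z_\sigma^\star(t),t)$, which by \Cref{p:JeqK} are \textbf{independent of} $i$. Thus $\Lambda_i(t) \equiv \lambda^\star(t)$ and $B_i(t) \equiv \beta^\star(t)$ for every $i$; this is the decisive step that produces the collapse.

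Next I would write the PMP adjoint equations $\dot\Lambda_i = -\partial H/\partial x_i$ and $\dot B_i = -\partial H/\partial z_i$. A direct computation gives
$$
\partial H/\partial x_i = c_i'(x_i) - \alpha_i \Lambda_i, \qquad \partial H/\partial z_i = -c_g'(\ell - z_\sigma) - \Lambda_i,
$$
using $\partial z_\sigma/\partial z_i = 1$ and that $\cX$ is additively separable. Substituting the collapsed co-states yields \eqref{e:2dsub} and \eqref{e:betamain}. Equations \eqref{e:SoC} and \eqref{e:contro} are simply the state dynamics \eqref{e:soccons}--\eqref{e:loadrampcons}. For the control equation \eqref{e:rampsum}, I would use the first-order stationarity condition $\partial H/\partial u_i = 0$, which reads $2\kappa[u_\sigma - \dot\ell] + B_i = 0$; since $B_i \equiv \beta^\star$ this relation is consistent across $i$ and gives $u_\sigma^\star = \dot\ell - \beta^\star/(2\kappa)$. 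Strong convexity of $c_d$ in its scalar argument and of $c_g, \cX$ (Assumption (A1)) ensures this stationary point is the global minimizer of $H$ over $u$. The terminal boundary conditions $\lambda^\star(\clT) = \beta^\star(\clT) = 0$ are the standard PMP transversality conditions for the free-endpoint problem (no terminal cost), while the initial values $x^\star(0+)$, $z^\star(0+)$ are provided by Assumption (A2)(b).

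The chief subtlety I anticipate is that the running cost in \eqref{e:modCost} is \emph{singular} in $u$: the quadratic penalty $c_{\tilde d}$ depends only on the sum $u_\sigma$, so the Hessian $\partial^2 H/\partial u\partial u^\transpose$ is rank one. Pontryagin's stationarity condition therefore pins down only the scalar $u_\sigma^\star$, not the individual components of $u^\star$---which is exactly what the theorem claims in \eqref{e:rampsum}. What needs care is to justify that (i) the \emph{individual} co-states really coincide (handled above via \Cref{p:JeqK}), and (ii) the minimization of $H$ over $u$ is well-posed despite the degeneracy. Item (ii) follows because the stationarity equation $2\kappa[u_\sigma-\dot\ell] = -\beta^\star$ together with the convexity of $c_d$ gives a \emph{unique} minimizing value of $u_\sigma$; any split of this sum among the $u_i$'s that is consistent with the collapsed state trajectory is optimal, which is consistent with the $M$-to-$2$ dimensional reduction asserted by the theorem.

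Finally, I would verify the system so obtained is self-consistent: smoothness of $\lambda^\star$ (A4) and of $(x^\star,z^\star)$ on $(0,\clT]$ (A2(a)) ensures each identity in the theorem is pointwise well-defined for $t\in(0,\clT]$, and the $3M+2$ equations---$M$ from \eqref{e:SoC}, $M$ from \eqref{e:contro}, $M$ from \eqref{e:2dsub}, one from \eqref{e:betamain}, and one from \eqref{e:rampsum}---together with the $M+M+1+1 = 2M+2$ boundary conditions form a well-posed two-point boundary value problem whose solution is, by the above derivation, necessarily the optimal trajectory.
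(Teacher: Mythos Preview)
Your proposal is correct and follows the same PMP-based route as the paper: set up the Hamiltonian \eqref{e:H}, read off the adjoint equations and the stationarity condition in $u$, and invoke transversality for the terminal conditions. The one substantive difference is in how you establish that the $2M$ co-states collapse to two scalars. You go through \Cref{p:JeqK} and the co-state/value-function-gradient identification: since $J^\star(x,z,t)=K^\star(x_\sigma,z_\sigma,t)$, the chain rule gives $\partial J^\star/\partial x_i = K^\star_{1,0,0}$ independent of $i$ (and likewise for $z_i$), so $\Lambda_i\equiv\lambda^\star$ and $B_i\equiv\beta^\star$ before you ever write the adjoint equations. The paper instead extracts the collapse from the PMP equations themselves (\Cref{l:FB}): the stationarity condition $\partial H/\partial u_i = 2\kappa(u_\sigma-\dot\ell)+\beta_i^\star=0$ has a left side that is independent of $i$ except through $\beta_i^\star$, forcing all $\beta_i^\star$ to coincide; the adjoint equation \eqref{e:betai} then forces all $\lambda_i^\star$ to coincide. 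Only afterward does \Cref{r:lambda} make the link to the gradients of $K^\star$. Your ordering makes the role of \Cref{p:JeqK} more central and is arguably cleaner, at the cost of needing the co-state/gradient identification (and hence the $C^1$ regularity of $K^\star$ from (A3)) up front; the paper's ordering shows that the collapse is already forced by the algebraic structure of the singular control cost, with the value-function identification playing a confirmatory rather than a load-bearing role.
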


Equation \eqref{e:2dsub} has a remarkable interpretation: the marginal costs for the $M$ different load classes evolve in a two-dimensional subspace generated by the functions $\{\lambda^\star(t),\ddt \lambda^\star(t) :   t\in (0, \clT] \}$. Since $c_i$ is strictly convex, $c'_i$ is strictly monotone and invertible. Consequently, the optimal SoC evolves on a two-dimensional manifold:
\begin{equation}
\label{e:xstar}
x_i^\star (t) =  (c'_i)^{-1} (\alpha_i  \lambda^\star(t) - \ddt \lambda^\star(t)).
\end{equation}

	It follows from \eqref{e:lambda} and  Assumption (A2b) that $\lambda^\star(0+) = K^\star_{1,0,0} (x_\sigma^\star(0),z_\sigma^\star(0) ,0)$. Then, in consequence of  \eqref{e:2dsub} and \eqref{e:SoC}, we have the following:
	\begin{corollary}
The optimal mapping $ (x^\star(0+), z^\star(0+) )$   is obtained by
\begin{equation}
\begin{aligned} 
\! \! \! 	\! \! \! \! \!
&
c_i'  (x_i^\star(0+)) =  \alpha_i K^\star_{1,0,0}(x^\star_\sigma(0), z^\star_\sigma(0),0) \\
& \qquad \qquad -	\ddt K^\star_{1,0,0}(x^\star_\sigma(t), z^\star_\sigma(t),t) \Big|_{t=0}	
\\
&
z_i^\star(0+)  = - \alpha_i x_i^\star(0+)  -\frac{1}{ c_i'' (x_i^\star(0+)) } \bigl[ \alpha_i 	\ddt K^\star_{1,0,0}(x^\star_\sigma(t), z^\star_\sigma(t),t)\big|_{t=0}   - 	\dddt K^\star_{1,0,0}(x^\star_\sigma(t), z^\star_\sigma(t),t)\big|_{t=0} \bigr]
\end{aligned} 
\label{e:IC}
\end{equation}
\spm{Warning on Aug 16:  I fixed significant errors in the first equation!  We need to work on this equation to see if we really have a mapping. \rd{JM: I think it is correct now.}}
\qedIEEE

	\end{corollary}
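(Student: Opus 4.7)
The plan is to derive both equations in \eqref{e:IC} as direct algebraic consequences of \Cref{t:main}, the definition \eqref{e:lambda}, and the regularity supplied by (A2)--(A4); no new variational argument is required.

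First I would obtain the first equation in \eqref{e:IC}. By Assumption (A2b), the right-hand limits satisfy $x^\star_\sigma(0+) = x_\sigma$ and $z^\star_\sigma(0+) = z_\sigma$, so evaluating \eqref{e:lambda} at $t = 0+$ yields $\lambda^\star(0+) = K^\star_{1,0,0}(x^\star_\sigma(0), z^\star_\sigma(0), 0)$. Since (A4) promotes $\bflambda^\star$ to $C^2$, the identity $\lambda^\star(t) = K^\star_{1,0,0}(x^\star_\sigma(t), z^\star_\sigma(t), t)$ may be differentiated in $t$ along the optimal trajectory, and the right-hand limit at $t = 0$ is well defined; this gives
\begin{equation*}
\ddt \lambda^\star(t)\Big|_{t=0+} = \ddt K^\star_{1,0,0}(x^\star_\sigma(t), z^\star_\sigma(t), t)\Big|_{t=0}.
\end{equation*}
Substituting both identities into \eqref{e:2dsub} evaluated at $t = 0+$ delivers the first equation.

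For the second equation, I would begin with the SoC dynamics \eqref{e:SoC} at $t = 0+$, rearranged as $z_i^\star(0+) = -\alpha_i x_i^\star(0+) - \ddt x_i^\star(t)|_{t=0+}$. The representation \eqref{e:xstar} is legitimate because, under (A1), strong convexity ($c_i''\ge\mu>0$) makes $c_i'$ a $C^1$ diffeomorphism, so $(c_i')^{-1}$ is $C^1$. Differentiating \eqref{e:xstar} using the inverse-function rule $\bigl[(c_i')^{-1}\bigr]'(y) = 1/c_i''\bigl((c_i')^{-1}(y)\bigr)$ gives
\begin{equation*}
\ddt x_i^\star(t) = \frac{1}{c_i''(x_i^\star(t))}\bigl[\alpha_i \ddt \lambda^\star(t) - \dddt \lambda^\star(t)\bigr].
\end{equation*}
Inserting this into the previous display, and converting $\ddt \lambda^\star$ and $\dddt \lambda^\star$ into total derivatives of $K^\star_{1,0,0}$ along the trajectory exactly as in the first step, yields the second equation.

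The main obstacle is justifying that every quantity on the right-hand side of \eqref{e:IC} is well defined at $t = 0+$. This rests on (A2a), which supplies a $C^1$ continuation of $(x^\star_\sigma, z^\star_\sigma)$ down to $t = 0$; on (A3), which ensures $K^\star_{1,0,0}$ exists as a continuous object; and on (A4), whose $C^2$ smoothness of $\bflambda^\star$ promotes $\dddt \lambda^\star$ to admit a continuous extension from $(0,\clT]$ to $t = 0+$. A residual subtlety, flagged in the margin of the excerpt, is whether \eqref{e:IC} actually defines a single-valued mapping: the first equation determines $x_i^\star(0+)$ uniquely for each $i$ because $c_i'$ is strictly increasing by strong convexity, and the second equation then returns $z_i^\star(0+)$ explicitly, so the system is in fact a well-defined pointwise mapping.
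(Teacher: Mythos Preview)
Your proposal is correct and follows essentially the same approach as the paper: the paper's entire argument is the sentence preceding the Corollary, which invokes \eqref{e:lambda} with (A2b) to identify $\lambda^\star(0+)$, then appeals to \eqref{e:2dsub} and \eqref{e:SoC}. You have supplied exactly these steps with the details filled in, including the inverse-function differentiation of \eqref{e:xstar} that the paper leaves implicit.
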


The remainder of this section concerns two very different interpretations of $\bflambda^\star$.   The first is easily predicted.


\subsection{$\bflambda^\star$ as co-state}

The Hamiltonian with co-state variables $\lambda, \beta \in \Re ^M$ corresponding to system equations \eqref{e:soccons} and \eqref{e:loadrampcons}, respectively, is denoted:
\begin{align}
\label{e:H}
\begin{split}
\!\!\!\!
H (x, z, u, {\lambda}, {\beta}, t) & \eqdef c (x, z ,u, t )  
\\
& \quad + \sum_i \lambda_i (- \alpha_i x_i - z_i) + \sum_i  \beta_i u_i
\end{split}
\end{align}
This notation  for the co-state variables would appear to conflict with the notation in 	\Cref{t:main}.   The choice of notation  is made clear in the following:

\begin{proposition}
	\label[proposition]{r:lambda}
	Associated with the optimal input-state  $(\bfnu^\star, \bfmx^\star, \bfmz^\star)$ are a pair of co-state variables $\bflambda^\star, \bfbeta^\star $ evolving in $ \Re ^M$ and satisfying for $ 0<t\le \clT$,
	\[
	(x^\star(t), z^\star(t)) =\argmin_{(x,z)} 
	H (x, z, u^\star(t), \lambda^\star(t), \beta^\star(t), t) 
	\]
	
	For each  $i \in \{1,...,M\}$ and $t \in (0, \clT]$,
	\begin{align}
	\label{e:lambdas}
	\begin{split}
	\lambda^\star_i(t) &=  K_{1,0,0}^\star(x^\star_\sigma(t), z^\star_\sigma(t), t) = \lambda^\star(t),
	\\
	\beta^\star_i(t) &= K_{0,1,0}^\star(x^\star_\sigma(t), z^\star_\sigma(t), t) = \beta^\star(t).  
	\end{split}
	\end{align}
	\qedIEEE
\end{proposition}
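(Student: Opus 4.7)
The plan is to combine Pontryagin's Minimum Principle (PMP) with the envelope identity linking adjoint variables to gradients of the value function, and then exploit \Cref{p:JeqK} to collapse the $M$ components of each co-state into a single scalar.

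First I would invoke PMP along the optimal trajectory $(\bfmx^\star, \bfmz^\star, \bfnu^\star)$ supplied by (A2). Using the Hamiltonian \eqref{e:H}, this yields absolutely continuous adjoint processes $\bflambda^\star, \bfbeta^\star \in \Re^M$ on $(0, \clT]$ satisfying
\begin{align*}
\dot\lambda_i^\star(t) &= -H_{x_i} = -c_i'(x_i^\star(t)) + \alpha_i \lambda_i^\star(t), \\
\dot\beta_i^\star(t)   &= -H_{z_i} = c_g'(\ell(t) - z_\sigma^\star(t)) + \lambda_i^\star(t),
\end{align*}
for $i \in \{1,\ldots,M\}$, together with the free-endpoint transversality $\lambda_i^\star(\clT) = \beta_i^\star(\clT) = 0$ (since \eqref{qp19} has no terminal cost). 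The pointwise Hamiltonian condition of PMP, combined with the strict convexity in $(x,z,u)$ supplied by (A1), delivers the stationarity characterization of $(x^\star, z^\star)$ asserted in the proposition display.

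Next I would identify $\bflambda^\star, \bfbeta^\star$ with gradients of the value function along the optimal path via the classical DPP relation $\lambda_i^\star(t) = J^\star_{x_i}(x^\star(t), z^\star(t), t)$ and $\beta_i^\star(t) = J^\star_{z_i}(x^\star(t), z^\star(t), t)$. The smoothness required to make sense of these partials is obtained by combining (A3) with \Cref{p:JeqK}: since $J^\star(x,z,t) = K^\star(x_\sigma, z_\sigma, t)$ and $K^\star$ is $C^1$, the chain rule gives $J^\star_{x_i} = K^\star_{1,0,0}(x_\sigma, z_\sigma, t)$ and $J^\star_{z_i} = K^\star_{0,1,0}(x_\sigma, z_\sigma, t)$, neither of which depends on the index $i$. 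Evaluating at $(x_\sigma^\star(t), z_\sigma^\star(t))$ and appealing to the scalar processes introduced in \eqref{e:lambda} yields \eqref{e:lambdas}.

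The main obstacle is the rigorous justification of the envelope identity under only $C^1$ regularity of $K^\star$, and I would sidestep a direct DPP derivation with a verification argument. Define the candidate scalar processes $\tilde\lambda(t) \eqdef K^\star_{1,0,0}(x_\sigma^\star(t), z_\sigma^\star(t), t)$ and $\tilde\beta(t) \eqdef K^\star_{0,1,0}(x_\sigma^\star(t), z_\sigma^\star(t), t)$; differentiate the HJB equation for $K^\star$ in its first and second arguments along the optimal trajectory; and verify that $\tilde\lambda, \tilde\beta$ satisfy the same linear adjoint ODEs as each $\lambda_i^\star, \beta_i^\star$, with matching terminal values at $t = \clT$ (both zero, since $K^\star(\cdot,\cdot,\clT) \equiv 0$). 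Uniqueness for linear ODEs with continuous coefficients then forces $\tilde\lambda \equiv \lambda_i^\star$ and $\tilde\beta \equiv \beta_i^\star$ on $(0, \clT]$ for every $i$. The right-hand limits at $t_0 = 0$ are handled by (A2b), and (A4) supplies the regularity needed to read $\dot\lambda^\star$ as a classical scalar derivative in the companion applications of the proposition (e.g., \eqref{e:xstar}).
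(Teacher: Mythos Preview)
Your route to \eqref{e:lambdas} is essentially the paper's: PMP supplies adjoint processes $\lambda_i^\star, \beta_i^\star$, and the envelope identity $\lambda_i^\star = \partial J^\star/\partial x_i$ combined with \Cref{p:JeqK} and the chain rule forces every component to equal the scalar $K^\star_{1,0,0}$, which is then the $\lambda^\star$ of \eqref{e:lambda} by definition. The paper dispatches the envelope step in one line by citing Clarke--Vinter and stops there; your alternative verification argument via ODE uniqueness is not needed, and as written it asks you to differentiate the HJB equation in its spatial arguments, which calls for $C^2$ regularity of $K^\star$ that (A3) does not supply.

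There is one real gap in your proposal. The displayed $\arg\min_{(x,z)} H$ claim does \emph{not} follow from the minimum principle as you assert. PMP delivers minimization of $H$ over the control $u$, not over the state; the adjoint equations read $\dot\lambda_i^\star = -\partial H/\partial x_i$ along the optimal trajectory, so $\partial H/\partial x_i\big|_{(x^\star,z^\star)} = -\dot\lambda_i^\star(t)$, which is generically nonzero, and strict convexity in $(x,z)$ does nothing to change this. The paper's own proof is silent on that display as well, so you are in good company, but your sentence attributing it to ``the pointwise Hamiltonian condition of PMP'' is incorrect and should be removed or replaced by a separate argument.
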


\subsection{$\bflambda^\star$ and a Lagrangian decomposition}

Rather than eliminate the variable $\bfmg$ using \eqref{e:balancecons},   new insight is obtained on maintaining $\bfmg,\bfgamma, \bfmx$
as variables in the optimization problem. First, construct a Lagrangian relaxation with Lagrange multiplier $\bfvarrho $, as follows:     
\begin{gather}
\label{e:phistar}
\begin{split}
\phi^\star(\bfvarrho) &=\!\! \inf_{g, \gamma, x} \int_0^\clT \Bigl \{ c_g(g(t) )   
+ c_d(\gamma(t)) + \cX(x(t))  \\ 
&   \quad\qquad   + \varrho(t) (\ell(t) - g(t) - z_\sigma(t)) \Bigr\} \,dt,
\end{split}
\end{gather}
where as usual, the infimum is  subject to \eqref{e:genrampcons}--\eqref{e:loadrampcons},  with given initial conditions.  

This amounts to a Lagrangian decomposition, consisting of the following $M+1$ independent optimization problems:
\begin{romannum}
	\item Minimization problem over $\{ g(t), \dot g(t) \}$:
	\begin{equation}
	\label{e:grid}
	\inf_{g} \int_0^\clT \clL_g(g(t), \dot g(t), t) \,dt,
	\end{equation}
where, based on the definition \eqref{e:genrampcons},
$$
	\clL_g(g(t) , \dot g(t), t)  =  c_g(g(t) )   
	+ c_d(\dot g(t)) -\varrho(t)  (g(t) -\ell(t)).
$$
	 
 	\item 
	Minimization problem over $\{ x_i(t) ,\dot x_i(t) \}$:	
\begin{equation}
	\inf_{x_i} \int_0^\clT \clL_i(x_i(t), \dot x_i(t), t) \,dt,
	\label{e:aggregator}
	\end{equation}
	where, after accounting for the constraints  \eqref{e:soccons}, \eqref{e:loadrampcons},
 $$
 \clL_i(x_i(t), \dot x_i(t), t)  = c_i\bigl(x_i(t) \bigr) + \alpha_i\varrho(t)  x_i(t) + \varrho(t) \dot x_i(t).
 $$ 
\end{romannum}

The Euler-Lagrange equations lead to equations for the optimizers:
\begin{proposition}
\label[proposition]{p:conrho}
For any function $\bfvarrho$ that is continuously differentiable on  $(0,\clT]$,	if  $\bfmg^\varrho$ and $\bfmx_i^\varrho$ are $C^1$ optimizers for the minimization problems in   \cref{e:grid,e:aggregator},  then they solve the following  differential equations:  \spm{Sep 8:  not sure I'd call these "ordinary" diff eqns.   It would be more accurate to say that we have a mapping from $(x,g)$ to $\rho$ that is defined by an ODE.  }
\begin{align}
	\label{e:ELg}
	\cG'\,   (g^\varrho(t))   - \ddt \cdG'(\dot g^\varrho(t)) & = \varrho(t), \\
	\label{e:EL}
	c'_i\,   (x_i^\varrho(t)) + \alpha_i \varrho(t) - \ddt \varrho(t) & = 0,
	\end{align}
	with  boundary conditions $\cdG'(\dot g^\varrho(\clT)) = \varrho(\clT) = 0$.
	\qedIEEE
	\end {proposition}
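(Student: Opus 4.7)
The plan is to derive both equations by applying the classical Euler--Lagrange necessary conditions of the calculus of variations to the two independent, decoupled minimization problems \eqref{e:grid} and \eqref{e:aggregator}. Under (A1) together with the hypothesis that $\bfvarrho\in C^1$, each Lagrangian is continuously differentiable in all arguments, and the optimizers are $C^1$ by hypothesis, so the classical necessary conditions apply on $(0,\clT]$. I would first establish that both problems are free at the terminal time $t=\clT$ (neither $g(\clT)$ nor $x_i(\clT)$ is constrained in the decomposed problems), so the transversality/natural boundary conditions must be invoked there; the initial values $g(0)$ and $x_i(0)$ are pinned by the data of \eqref{qp19} and the balance equation \eqref{e:balancecons}, so no boundary condition is required at $t=0$.

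For the generator problem \eqref{e:grid}, I would compute
\[
\frac{\partial \clL_g}{\partial g} = c_g'(g) - \varrho(t),
\qquad
\frac{\partial \clL_g}{\partial \dot g} = c_d'(\dot g).
\]
The Euler--Lagrange equation $\partial_g \clL_g - \ddt \partial_{\dot g} \clL_g = 0$ evaluated along $g^\varrho$ yields \eqref{e:ELg} directly. Since $g(\clT)$ is unconstrained, the natural boundary condition $\partial_{\dot g}\clL_g\big|_{t=\clT}=0$ gives $c_d'(\dot g^\varrho(\clT))=0$, which is the first of the required boundary conditions.

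For the aggregator problem \eqref{e:aggregator}, I would compute
\[
\frac{\partial \clL_i}{\partial x_i} = c_i'(x_i) + \alpha_i\varrho(t),
\qquad
\frac{\partial \clL_i}{\partial \dot x_i} = \varrho(t).
\]
The Euler--Lagrange equation then reads $c_i'(x_i^\varrho(t)) + \alpha_i \varrho(t) - \ddt \varrho(t) = 0$, matching \eqref{e:EL}. The subtlety here (and the one item I would expect a reader to question) is the natural boundary condition: because $\clL_i$ depends on $\dot x_i$ only through the linear term $\varrho(t)\dot x_i$, the transversality condition $\partial_{\dot x_i}\clL_i\big|_{t=\clT}=0$ is not a condition on the optimizer $x_i^\varrho$ but a condition on the multiplier, namely $\varrho(\clT)=0$.

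I would justify this asymmetry by an integration-by-parts argument rather than a black-box appeal to transversality: writing $\int_0^\clT \varrho(t)\dot x_i(t)\,dt = \varrho(\clT)x_i(\clT)-\varrho(0)x_i(0)-\int_0^\clT \dot\varrho(t)x_i(t)\,dt$ and then taking the first variation with respect to a perturbation $\eta\in C^1$ satisfying $\eta(0)=0$ but $\eta(\clT)$ free, the stationarity condition splits into the E--L equation \eqref{e:EL} (from variations vanishing at $t=\clT$) and the endpoint condition $\varrho(\clT)\eta(\clT)=0$ for arbitrary $\eta(\clT)$, forcing $\varrho(\clT)=0$. The main obstacle is essentially this conceptual point: verifying that for a $C^1$ optimizer of \eqref{e:aggregator} to exist with free terminal $x_i(\clT)$, the input multiplier $\bfvarrho$ must itself satisfy $\varrho(\clT)=0$; everything else is a textbook computation.
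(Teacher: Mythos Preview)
Your proposal is correct and follows essentially the same route as the paper: both derive \eqref{e:ELg} and \eqref{e:EL} by applying the classical Euler--Lagrange necessary conditions to the decoupled problems \eqref{e:grid} and \eqref{e:aggregator}, and obtain the terminal boundary conditions from the free-endpoint (natural) conditions $\partial_{\dot g}\clL_g|_{t=\clT}=0$ and $\partial_{\dot x_i}\clL_i|_{t=\clT}=0$. Your integration-by-parts argument making explicit why the latter forces $\varrho(\clT)=0$ rather than constraining $x_i^\varrho(\clT)$ is a helpful elaboration that the paper leaves implicit in its citation of the standard transversality result.
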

	\bigskip

The dual functional $\phi^\star$ satisfies   weak duality: $ \phi^\star(\bfvarrho) \le J^\star(x(0) , z(0), 0)$ for any  $\bfvarrho$,   and the dual convex program is defined as $\sup_\rho  \phi^\star(\bfvarrho) $.     
 The solution to the dual is obtained by   combining \Cref{p:conrho} and \Cref{t:main},  and from this we obtain strong duality:

\begin{proposition}
\label[proposition]{p:rhoAndLambda}
The dual admits an optimizer given by 	 
	\[
	\varrho^\star(t) = -\lambda^\star(t)  \,, \qquad  t \in (0, \clT].
	\]
	\qedIEEE
\end{proposition}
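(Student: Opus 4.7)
The plan is to set $\bfvarrho^\star(t) = -\bflambda^\star(t)$, show that the Euler–Lagrange optimizers of the decomposed subproblems \eqref{e:grid} and \eqref{e:aggregator} coincide with the primal optimizers $g^\star, \gamma^\star, x^\star, z^\star$ from \Cref{t:main}, and then invoke weak duality together with an explicit evaluation of $\phi^\star(-\bflambda^\star)$ to obtain strong duality.

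First I would verify that $\bfvarrho^\star = -\bflambda^\star$ satisfies the hypotheses of \Cref{p:conrho}: continuous differentiability on $(0,\clT]$ follows from Assumption (A4), and the terminal condition $\varrho^\star(\clT) = 0$ matches the boundary condition $\lambda^\star(\clT) = 0$ given in \Cref{t:main}. Next, for each load aggregator subproblem \eqref{e:aggregator}, the Euler–Lagrange equation \eqref{e:EL} with this choice of $\bfvarrho$ becomes
\[
c'_i(x_i^\varrho(t)) = \alpha_i \lambda^\star(t) - \ddt \lambda^\star(t),
\]
which is precisely \eqref{e:xstar}. By strict convexity of $c_i$ (Assumption (A1)), the solution is unique and equals $x_i^\star$. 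Consequently $z_i^\varrho = -\dot x_i^\varrho - \alpha_i x_i^\varrho = z_i^\star$ by \eqref{e:SoC}, and summing over $i$ gives $z_\sigma^\varrho = z_\sigma^\star$.

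For the generator subproblem \eqref{e:grid}, I would substitute $c_d(x) = \kappa x^2$ into \eqref{e:ELg} to obtain $c_g'(g^\varrho(t)) - 2\kappa\, \ddot g^\varrho(t) = -\lambda^\star(t)$ with boundary condition $\dot g^\varrho(\clT) = 0$. To see that $g^\star(t) \eqdef \ell(t) - z^\star_\sigma(t)$ satisfies this ODE, use \eqref{e:rampsum} to compute $\dot g^\star(t) = \dot\ell(t) - u^\star_\sigma(t) = \tfrac{1}{2\kappa} \beta^\star(t)$, and then \eqref{e:betamain} to compute $\ddot g^\star(t) = \tfrac{1}{2\kappa}[c'_g(g^\star(t)) + \lambda^\star(t)]$; substitution verifies \eqref{e:ELg}, and $\beta^\star(\clT) = 0$ supplies the terminal condition. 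By strict convexity of $c_g$, the Euler–Lagrange equation for \eqref{e:grid} has a unique solution, so $g^\varrho = g^\star$ and $\gamma^\varrho = \gamma^\star$.

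Finally, strong duality follows by direct evaluation. Since $g^\varrho + z^\varrho_\sigma = g^\star + z^\star_\sigma = \ell$ pointwise by \eqref{e:balancecons}, the Lagrangian coupling term in \eqref{e:phistar} vanishes, yielding
\[
\phi^\star(-\bflambda^\star) = \int_0^{\clT} \bigl[\, c_g(g^\star(t)) + c_d(\gamma^\star(t)) + \cX(x^\star(t)) \,\bigr]\, dt = J^\star(x(0),z(0),0).
\]
Combined with weak duality $\phi^\star(\bfvarrho) \le J^\star(x(0),z(0),0)$ for all $\bfvarrho$, this identifies $-\bflambda^\star$ as a dual optimizer. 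The main obstacle I anticipate is bookkeeping: ensuring that the Euler–Lagrange derivation of \Cref{p:conrho} implicitly carries the terminal transversality condition $\varrho(\clT)=0$ consistent with $\lambda^\star(\clT)=0$, and confirming that the decomposition faithfully encodes $z_i = -\dot x_i - \alpha_i x_i$ so that the aggregator optimizers automatically produce the correct balance $z^\varrho_\sigma = z^\star_\sigma$. Once these alignments are checked, the complementary slackness identity follows from satisfaction of the primal equality constraint along the optimal decomposed trajectories.
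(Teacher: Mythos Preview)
Your proposal is correct and follows essentially the same route as the paper: substitute $\varrho = -\lambda^\star$ into the Euler--Lagrange equation \eqref{e:EL}, compare with \eqref{e:2dsub} to identify $x_i^\varrho = x_i^\star$, and conclude strong duality. The paper's own proof is considerably terser---it states only the load-aggregator comparison and asserts the absence of a duality gap---whereas you additionally verify the generator subproblem via \eqref{e:betamain} and \eqref{e:rampsum} and make the strong-duality step explicit through the vanishing of the coupling term; these are welcome elaborations but not a different approach.
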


\section{Real time prices}
\label{s:idiots}

A dynamic competitive equilibrium model is introduced in this section.   \Cref{p:rhoAndLambda} is applied to establish the existence of a competitive equilibrium,  and other results from the previous section provide approximations of the average price in terms of both the  average marginal value and the average marginal cost.

 We begin by recalling basic concepts.


A  ``snapshot'' commodity market for a divisible good $G \in \Re $ is defined based on two ``utility functions'':     for consumption, $\clU_D \colon  \Re  \to   \Re $,   
   and for supply,  $\clU_S:  \Re  \to   \Re $.    
The \textit{social planner's problem} (SPP) of macro-economics is defined as the optimization problem,
\begin{equation*}
\max_{G} ~ \clU_D(G) + \clU_S(G).
\end{equation*}
A solution is called an \textit{efficient allocation}.

If the utility functions are strictly concave, and there exists an optimizer $G^\star$, then there is a unique price $\varrho^\star$ that achieves the so-called \textit{competitive equilibrium} (CE):
\[
G^\star = \underset{G} {\arg\,\max} ~ \clU_D(G) - \varrho^\star G =  \underset{G} {\arg\,\max} ~ \clU_S(G) + \varrho^\star G.
\]
The price $\varrho^\star$ is the Lagrange multiplier associated with $G_D = G_S$ in the equivalent formulation of the SPP \cite{tak85}:
\begin{equation}
\begin{split}
&\max_{G} ~ \clU_D(G_D) + \clU_S(G_S), \\
&\text{ s.t. } {G_D = G_S}.
\end{split}
\label{e:CEsnap}
\end{equation}

Formulations of dynamic CE theory address problems in which $G$ is a function of time and subject to various constraints 
\cite{tak85}; see \cite{chomey07,chomey10, wankownegshamey10,kizman10b,zavani11a} for theory in   the context of power systems.

The dynamic CE model considered in this section involves $M+1$ players:   there is a single supplier (or class of suppliers) that provide traditional generation $\bfmg$  and M consumers with power deviation $-\bfmz_i$,  for $1\le i\le M$.    The utility function for each player is the negative of the cost on the SoC:
$\clU_{D_i}(z_i) =   - c_i\bigl(x_i)$ and $\clU_S(g,\dot g) = -c_g(g)  -c_d(-\dot g)$.  The SPP is defined as follows:
\begin{gather}
\label{e:CEdyn}
\max_{g,z_i}  
 \int_{0}^{\clT}  \Bigl\{  
 		 \clU_S(g(t) ,\dot g(t))  +  \sum_{i=1}^M  \clU_{D_i}(z_i(t))   
		 		\Bigr\}  \, dt ,
\end{gather}
subject to the balancing and dynamic constraints  \eqref{e:balancecons}--\eqref{e:loadrampcons}. This is equivalent to the optimization problem \eqref{qp19} analyzed in the previous sections.

The Lagrangian decomposition behind   \Cref{p:rhoAndLambda} is analogous to the Lagrangian relaxation of \eqref{e:balancecons} in \eqref{e:CEdyn},  and hence,   $\varrho^\star(t) = -\lambda^\star(t)$ is the  competitive equilibrium price at time $t$, for $0<t\le\clT$.  That is, the optimizers   of \Cref{e:grid,e:aggregator} correspond to:
\[
\begin{aligned}
 \bfmz_i^\star      & =  \argmax_{z_i}     \int_0^\clT    \clU_{D_i}(z_i(t))    +  \varrho^\star(t) z_i(t) \, dt ,
\\
\bfmg^\star &= \argmax_g   \int_0^\clT  \clU_S(g(t) ,\dot g(t))    +\varrho^\star(t)   g(t)   \, dt  .
\end{aligned} 
\]
	 
Two quantities of special interest in  CE theory are the marginal cost and the marginal value at equilibrium.     
The marginal cost at time $t$ is  $ \cG'(g^\star(t))$,  and the marginal value is defined as the negative of marginal cost for each load:   $- c'_i\,   (x_i^\star(t))$. The averages of these quantities and the equilibrium price are denoted, respectively,  
\begin{align*}
\varrhoav & =  \frac{1} {\clT} \int_0^\clT \varrho^\star(t)\,dt, \\
\MCav & =  \frac{1} {\clT} \int_0^\clT  \cG'(g^\star(t))  \,dt, \\
\MVav & =  -\frac{1} {\clT} \int_0^\clT   c'_i\,   (x_i^\star(t))\,dt .
\end{align*}

In the snapshot CE model, it is known that the price coincides with both marginal value and marginal cost --- this is immediate from the Lagrangian decomposition of \eqref{e:CEsnap}.    This conclusion fails in general in a dynamic setting.   For the dynamic CE models considered in \cite{chomey10, wankownegshamey10}, it is shown that the price coincides with the marginal value (in which the marginal value is   defined with respect to power consumption), and the \textit{average price} is approximated by  average marginal cost.   These conclusions admit the following extension to the CE model introduced in this paper:   
\begin{proposition}
\label[proposition]{p:avgprice}
The average of the competitive equilibrium price admits the following approximations:
\begin{romannum}
\item    \textit{Weighted average marginal value}, plus $O(1/\clT)$:
	\begin{equation}
	\varrhoav = \frac{1}{\alpha_i}\MVav +  e_i^d/\clT\,,\qquad 1\le  i\le M.
	\label{e:MVavg} 
	\end{equation} 
\item    \textit{Average  marginal cost}, plus $O(1/\clT)$:
	\begin{equation}
	\varrhoav = \MCav + e^g/\clT,
	\label{e:MCavg} 
	\end{equation}
\end{romannum}
where the error terms are the differences,
\[
		e_i^d =    [\varrho^\star(\clT) - \varrho^\star(0) ]  /{\alpha_i  }
				\,,\quad
		e^g =\cdG'(\dot g^\varrho(0))    -    \cdG'(\dot g^\varrho(\clT ))    .
	\]
\qedIEEE
\end{proposition}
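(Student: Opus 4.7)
The plan is to obtain both identities by integrating the two Euler--Lagrange equations of \Cref{p:conrho} over $[0,\clT]$ and then invoking \Cref{p:rhoAndLambda} to substitute $\varrho^\star=-\lambda^\star$. At $\varrho = \varrho^\star$, Lagrangian decomposition combined with strong duality (\Cref{p:rhoAndLambda}) and the uniqueness in Assumption (A2) guarantees that the minimizers of the decoupled subproblems coincide with the primal optimizers: $g^\varrho = g^\star$ and $x_i^\varrho = x_i^\star$. Both claims then have the same structure: the derivative term in the Euler--Lagrange equation becomes a boundary term after integration, and this boundary term is exactly the $O(1/\clT)$ error.

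For part (i), I would specialize the load-side equation $c'_i(x_i^\varrho(t)) + \alpha_i \varrho(t) - \ddt \varrho(t) = 0$ to $\varrho = \varrho^\star$ and integrate over $[0,\clT]$. The fundamental theorem of calculus applied to the last term gives
\[
\int_0^{\clT} c'_i(x_i^\star(t))\,dt + \alpha_i \int_0^{\clT} \varrho^\star(t)\,dt = \varrho^\star(\clT) - \varrho^\star(0).
\]
By the definitions of $\MVav$ and $\varrhoav$, the left-hand side equals $-\clT\,\MVav + \alpha_i \clT\, \varrhoav$. Dividing through by $\alpha_i \clT$ and rearranging yields exactly \eqref{e:MVavg} with the stated $e_i^d$.

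For part (ii), the same procedure applied to the generator-side equation $\cG'(g^\varrho(t)) - \ddt \cdG'(\dot g^\varrho(t)) = \varrho(t)$ gives
\[
\int_0^{\clT} \cG'(g^\star(t))\,dt - \bigl[ \cdG'(\dot g^\varrho(\clT)) - \cdG'(\dot g^\varrho(0)) \bigr] = \int_0^{\clT} \varrho^\star(t)\,dt,
\]
which, after dividing by $\clT$, is precisely \eqref{e:MCavg}. The main obstacle I anticipate is not the calculation itself but the careful identification of the Euler--Lagrange minimizers with the global primal optimizers at $\varrho = \varrho^\star$; without \Cref{p:rhoAndLambda} and uniqueness, one would only obtain identities for \emph{some} minimizer rather than for $(g^\star, x_i^\star)$. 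A minor bookkeeping step is to verify that the error terms are genuinely bounded in $\clT$: the terminal conditions $\lambda^\star(\clT)=0$ from \Cref{t:main} and $\cdG'(\dot g^\varrho(\clT))=\varrho(\clT)=0$ from \Cref{p:conrho} collapse each boundary term to its value at $t=0$, making $e_i^d$ and $e^g$ bounded independently of $\clT$.
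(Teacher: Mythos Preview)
Your proposal is correct and mirrors the paper's own argument: integrate \eqref{e:EL} and \eqref{e:ELg} over $(0,\clT]$, apply the fundamental theorem of calculus to the derivative terms, and divide by $\clT$. Your additional remarks on identifying $g^\varrho,x_i^\varrho$ with $g^\star,x_i^\star$ via \Cref{p:rhoAndLambda} and on the boundedness of the error terms are valid refinements that the paper's terse proof leaves implicit.
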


\begin{figure}[h]
	\vspace{0.4em}
	\Ebox{.75}{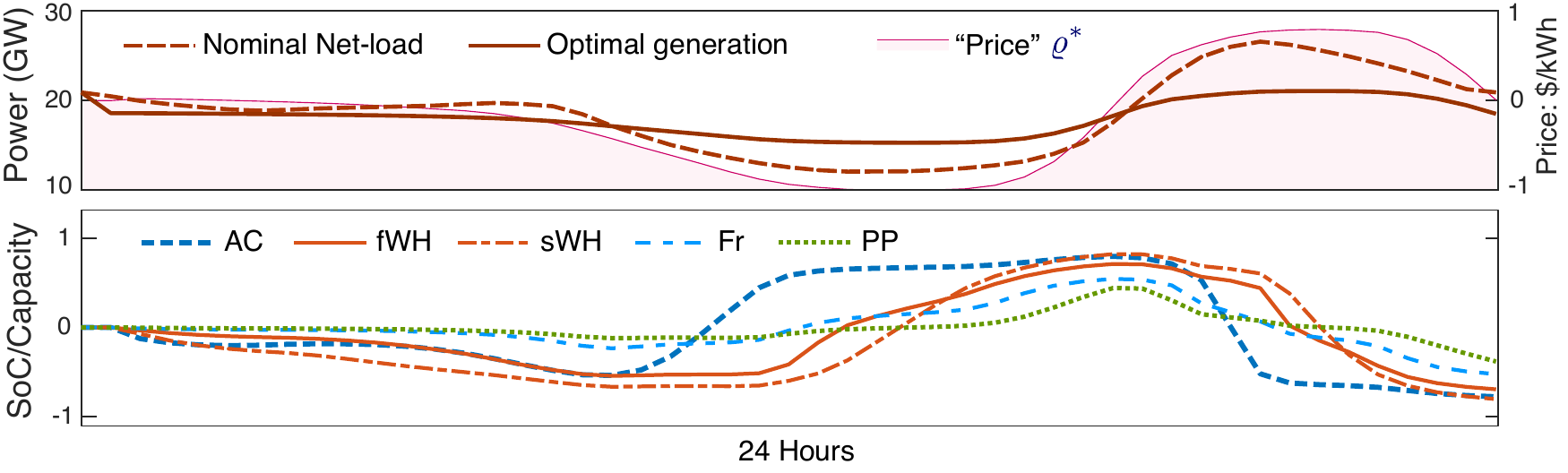}
	\vspace{-.5em}
	\caption{Optimal SoC trajectories remain within capacity bounds throughout this run.  The peak generation for the optimal solution is about 8GW less than what would be required without load control,  and the generation ramping is reduced significantly.}  
	\vspace{-.75em}
	\label{f:soc} 
\end{figure}

\section{Numerical Simulations}
\label{s:numerics}

Simulations were conducted to validate the main results of this paper.  A discrete time approximation of the resource allocation problem \eqref{qp19} was solved with $5$ classes of loads: ACs, residential WHs with faster time cycles (fwh), commercial WHs with slower time cycles (swh), fridges, and pool pumps (pp),  based on the SoC model used in \cite{cammatkiebusmey18}.  The net-load  $\bfmell$ is based on ``duck curve" predictions for California in March, 2020; this data is obtained from CAISO.

The results in \Cref{f:idiotsprice} were obtained using quadratic cost on the SoC.   Alternatives were tested to avoid the SoC violations (i.e., SoC/Capacity $> 1$) observed in that experiment.   In the results surveyed here,  the cost functions are strongly convex polynomials,   $c_i(x) = \kappa_1 (x/C_i)^8 + \kappa_2 (x/C_i)^2$, where $C_i$ is the energy capacity of the load class $i$ in GWh. For each class of TCLs, $\kappa_1 = 1$ and $\kappa_2 = 0.1$.    
Because QoS requirements for pools are less critical, the
 quadratic cost is maintained for pool pumps: $\kappa_1 = 0$ and $\kappa_2 = 1$. 
Following \cite{cammatkiebusmey18}, the cost on generation is of the form $c_g(x) = \kappa_g [x-\barell]^2$, where $\kappa_g$ is a constant gain.  Table I provides values of the SoC leakage parameters $\alpha_i$ for the different load classes along with the energy capacities $C_i$. The numbers are based on \cite{cammatkiebusmey18} and \cite{matdyscal15}. 


\begin{center}
	\nobreak
	{\sc Table I:  Load Parameters} \nobreak
	\\[.75em]
	{
		\small
		\begin{tabular}{|| l c c c c c r ||}
			Par. & Unit & ACs & fWHs & sWHs & RFGs & PPs \\
			$\alpha_i$ & hours\textsuperscript{-1} & 0.25 & 0.04 & 0.01 & 0.10 & 0.004  \\
			$C_i$ & GWh &  4 & 2 & 5 & 0.5 & 2  \\
		\end{tabular}
	}
\end{center}

\smallbreak

The top half of \Cref{f:soc} shows the net-load  $\bfmell$ (duck curve), the optimal traditional generation $\bfmg^\star$, and the equilibrium price $\bfvarrho^\star$ (normalized to $\pm 1~ \$/\text{unit energy}$), while the latter half shows the optimal SoC trajectories normalized by the respective energy capacities, $\bfmx^\star_i/C_i$. 

There is remarkable correspondence between the net-load and the equilibrium price signal. As  expected from combining \Cref{r:lambda} with  \Cref{p:rhoAndLambda}, it is observed that
the optimal SoC evolves in tandem with the price signal. For example, the negative prices in the afternoon lead to hotter than average WHs and cooler than average houses, whereas the higher prices in the late evening  result in colder WHs and hotter houses. The polynomial costs on SoC indirectly impose QoS: notice that $\bfmx^\star_i/C_i$ is between $\pm 1$, which implies that the SoC for each load class is within the energy capacity limit.

The optimal SoC trajectories evolve in a two-dimensional subspace. Consequently, 
the optimal SoC trajectory of any load class can be recovered based on observations of the SoC for two other load classes.   
In particular, given the optimal SoC of residential water heaters and ACs,   
we can recover the functions $\{\lambda^\star(t), \ddt \lambda^\star(t): 0 < t \le \clT\}$ as follows:
\begin{equation*}
\begin{bmatrix}
\lambda^\star(t) \\
\ddt \lambda^\star(t)
\end{bmatrix}
=
\begin{bmatrix}
\alpha_{ac}  & -1 \\
\alpha_{fwh} & -1
\end{bmatrix}
^{-1}
\begin{bmatrix}
c'_i(x_{ac}^\star(t)) \\
c'_i(x_{fwh}^\star(t)) 
\end{bmatrix}
\end{equation*}
We can hence recover any load trajectories using \eqref{e:xstar}.

\Cref{f:rwh} shows that the  SoC trajectory of pool pumps recovered using the optimal SoC trajectories of ACs and residential water heaters matches the  optimal SoC trajectory of pool pumps.

\begin{figure}[h]
	\vspace{0.4em}
	\Ebox{.8}{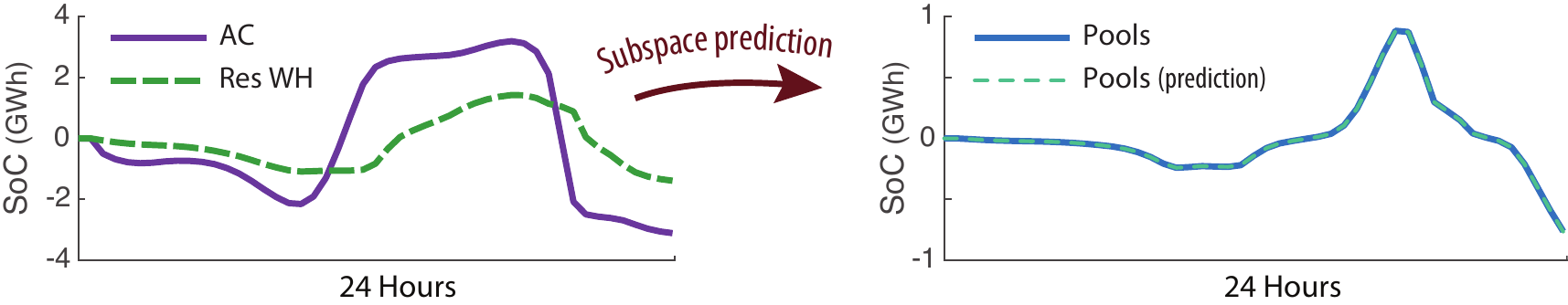}
	\vspace{-.5em}
	\caption{SoC  for pool pumps recovered using those of ACs and   WHs.}
		\vspace{-.75em}
	\label{f:rwh} 
\end{figure}
  
\section{Conclusions}
\label{s:con}


It is a remarkable fact that a $2M$-dimensional optimal control problem can be reduced to just two dimensions, regardless of the number of load classes $M$.   Beyond its computational value, the result provides new insight and has potential economic implications.

The most valuable implication of \Cref{t:main} is for numerical computation.  We are investigating how to best apply state space collapse.

Analysis of the resource allocation problem for demand dispatch in a stochastic control setting, considering the impact of model uncertainty, is also a topic for future work.

We leave the reader to think over the following: can insight from the economic analysis be used to formulate contracts for grid services with large aggregators and industrial customers? The prices would not be used for real-time control, but they could be utilized to construct metrics in order to evaluate the performance of the participating consumers.

\smallbreak








\section{Appendix}

We begin with further clarity on cheap control:
\begin{lemma}
\label[lemma]{t:cheap} 

For given $t_0\ge 0$, suppose that $(x(t_0),z(t_0))$ and $(x^+,z^+)$ are two state values satisfying  $x_\sigma(t_0) =x^+_\sigma$,  $z_\sigma(t_0) =z^+_\sigma$.    

Then, for each $\delta>0$, there is a $C^\infty$ input $u$ satisfying $u_\sigma(t)=0$ for all $t\ge t_0$,  and the resulting state trajectory   from $(x(t_0),z(t_0)) $ satisfies
$z(t_0+\delta) = z^+$,   $x(t_0+\delta) = x^+ + O(\delta)$,   and $x(t)$ is bounded on $[t_0,t_0+\delta]$.   
\end{lemma}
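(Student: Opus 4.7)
\bigskip
\noindent\textbf{Proof plan for \Cref{t:cheap}.}

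\smallskip
The plan is to construct the input by first prescribing a smooth target trajectory for $\bfmz$ on $[t_0,t_0+\delta]$ and then setting $u=\dot z$. The target is designed so that (i) its endpoints are $z(t_0)$ and $z^+$, (ii) $z_\sigma(t)\equiv z_\sigma(t_0)$ (equivalently $u_\sigma\equiv 0$), and (iii) each component integral $\int_{t_0}^{t_0+\delta} z_i(s)\,ds$ matches, up to $O(\delta)$, the value that the variation-of-constants formula requires in order to steer $x_i$ to $x_i^+$.

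\smallskip
Fix $C^\infty$ reference profiles $\tilde\phi,\tilde\psi\colon[0,1]\to\Re$ with $\tilde\phi(0)=0$, $\tilde\phi(1)=1$, $\tilde\psi(0)=\tilde\psi(1)=0$, and $\int_0^1\tilde\psi(s)\,ds=1$. Rescale to $[t_0,t_0+\delta]$ by
\[
\phi(t)=\tilde\phi\!\left(\tfrac{t-t_0}{\delta}\right),\qquad \psi(t)=\tfrac{1}{\delta}\tilde\psi\!\left(\tfrac{t-t_0}{\delta}\right),
\]
so that $\int_{t_0}^{t_0+\delta}\psi(s)\,ds=1$ and $\|\psi\|_\infty=O(1/\delta)$. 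I would then look for $z$ of the form
\[
z_i(t)=z_i(t_0)\bigl(1-\phi(t)\bigr)+z_i^+\phi(t)+a_i\psi(t),\qquad i=1,\dots,M,
\]
where $a\in\Re^M$ is to be determined. The boundary conditions $z(t_0)=z(t_0)$ and $z(t_0+\delta)=z^+$ hold by inspection, and imposing $\sum_i a_i=0$ together with the hypothesis $z^+_\sigma=z_\sigma(t_0)$ yields $z_\sigma(t)\equiv z_\sigma(t_0)$, hence $u_\sigma=\dot z_\sigma\equiv 0$. The input $u=\dot z$ is $C^\infty$ since $\phi,\psi$ are.

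\smallskip
The next step is to choose $a$ so that $x(t_0+\delta)=x^++O(\delta)$. By variation of constants,
\[
x_i(t_0+\delta)=e^{-\alpha_i\delta}x_i(t_0)-\int_{t_0}^{t_0+\delta}e^{-\alpha_i(t_0+\delta-s)}z_i(s)\,ds.
\]
Using $e^{-\alpha_i\delta}=1+O(\delta)$ and $e^{-\alpha_i(t_0+\delta-s)}=1+O(\delta)$ uniformly in $s\in[t_0,t_0+\delta]$, and writing $\Delta x_i:=x_i^+-x_i(t_0)$, this reduces to the requirement
\[
\int_{t_0}^{t_0+\delta}z_i(s)\,ds=-\Delta x_i+O(\delta).
\]
Plugging the ansatz into this integral yields $z_i(t_0)\delta(1-c)+z_i^+\delta c+a_i$, with $c:=\int_0^1\tilde\phi$. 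I would then set
\[
a_i=-\Delta x_i-z_i(t_0)\delta(1-c)-z_i^+\delta c+\tfrac{\delta}{M}z_\sigma(t_0),
\]
so that $a_i=-\Delta x_i+O(\delta)$ and, using $\sum_i\Delta x_i=0$ and $z^+_\sigma=z_\sigma(t_0)$, the compatibility constraint $\sum_i a_i=0$ is satisfied. This is the one step where the hypotheses $x^+_\sigma=x_\sigma(t_0)$ and $z^+_\sigma=z_\sigma(t_0)$ are both used.

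\smallskip
The main obstacle is this compatibility step: the conservation law $u_\sigma\equiv 0$ forces $\int_{t_0}^{t_0+\delta}z_\sigma(s)\,ds=\delta z_\sigma(t_0)$, whereas the target $-\sum_i\Delta x_i$ vanishes; the discrepancy is $O(\delta)$ and is absorbed by distributing it equally across the $a_i$, which is precisely why the tolerance in the statement must be $O(\delta)$ rather than $0$. Once $a$ is fixed, a direct estimate gives $\|z\|_\infty=O(1/\delta)$ (dominated by the $a_i\psi$ term), so $\int_{t_0}^{t_0+\delta}\|z(s)\|\,ds=O(1)$ and the Gronwall-type bound applied to \Cref{e:SoC_ODE} shows that $x(t)$ remains bounded on $[t_0,t_0+\delta]$. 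Substituting the chosen $a$ back into the variation-of-constants identity gives $x(t_0+\delta)=x^++O(\delta)$, completing the construction. This construction reflects the ``cheap control'' nature of the problem highlighted in \Section{s:control}: although individual inputs $u_i$ may be large (of order $1/\delta^2$), the aggregate input $u_\sigma$ is identically zero and hence incurs no singular cost.
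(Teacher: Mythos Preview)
Your construction is correct and is essentially the paper's proof in disguise: the paper picks a $C^\infty$ probability density $f$ supported on $(0,\delta)$ and sets $u_i(t)=\Delta z_i\, f(t)-\Delta x_i\, f'(t)$, which after one integration is exactly your ansatz with $\phi(t)=\int_0^{t} f$, $\psi=f$, and $a_i=-\Delta x_i$. Your $O(\delta)$ correction to $a_i$ is unnecessary (the hypothesis $\sum_i\Delta x_i=0$ already yields $\sum_i a_i=0$ with the simpler choice), but it is harmless.
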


\begin{proof}
Without loss of generality we take $t_0=0$.  
Let $f\colon\Re\to\Re_+$ be a $C^\infty$ probability density, 
with support on the interval $(0,\delta)$,
and choose
\[
u_i(t) = [ z_i^+ - z_i(0)]  f(t)   -   [x^+_i - x_i(0) ] f'(t)
\]
where $f'$ denotes the derivative of $f$.
This is a ``cheap control'',  since $u_\sigma(t)=0$ for all $t$.
We then have by definition
\[
\begin{aligned}
z_i(t) &= 
z_i(0) + [ z_i^+ - z_i(0)]    \int_0^t  f(\tau)\, d\tau -   [x^+_i - x_i(0) ] f(t)
\end{aligned} 
\]
This gives $z_i(\delta) = z_i^+$,   and 
\[
\begin{aligned}
x_i(t)  &   =   e^{-\alpha_i t }x_i(0)   -   \int_0^t  e^{-\alpha_i (t-\tau)  } z_i(\tau)\, d\tau   
\\
&    =  x_i(0)     -   \int_0^t   z_i(\tau)\, d\tau   +O(\delta)
\\
&    =  x_i(0)   + [x^+_i - x_i(0) ]     \int_0^t  f(\tau)\, d\tau   +O(\delta)
\end{aligned} 
\]
The SoC trajectory is bounded,   and $x_i(\delta) = x^+_i + O(\delta)$.  
\end{proof}

\ProofOf{Proof of \Cref{p:JeqK}}
It is obvious that $J^\star$ is finite valued.   To see that it is convex,  let $\{ (x^i,z^i) : i=0,1\}$ denote two initial conditions (starting at time $t_0$),   fix $\theta\in (0,1)$, and denote $(x^\theta,z^\theta) = (1-\theta) (x^0,z^0) +\theta (x^1,z^1)$.    It remains to show that $J^\star(x^\theta,z^\theta,t_0)  \le (1-\theta) J^\star(x^0,z^0,t_0)  +\theta J^\star(x^1,z^1,t_0)$ for each $t_0$.  Consider any continuous input-state trajectories:
\[
\{ (u^i_{[t_0,\clT]} , x^i_{[t_0,\clT]} ,z^i_{[t_0,\clT]} ) : i=0,1\}
\]
with given initial conditions $x^i(t_0)=x^i$,  $z^i(t_0)=z^i$.   Because the system is linear, it follows that the convex combination is feasible from $(x^\theta,z^\theta) $:    with $u^\theta_{[t_0,\clT]}$ defined as the convex combination of the inputs,  the resulting state trajectory is the convex combination $(x^\theta_{[t_0,\clT]} ,z^\theta_{[t_0,\clT]} ) $.    Consequently,
\[
\begin{aligned}
J^\star(x^\theta,z^\theta, t_0) &\le  \int_{t_0}^{\clT}     c(x^\theta(t), z^\theta(t), u^\theta(t),t) \, dt
\\
&
\le  (1-\theta) \int_{t_0}^{\clT}     c(x^0(t), z^0(t), u^0(t),t) \, dt   +   \theta \int_{t_0}^{\clT}     c(x^1(t), z^1(t), u^1(t)) \, dt  ,
\end{aligned} 
\]
where the first inequality is the definition of $J^\star$ as an infimum, and the second follows from convexity of the cost function.
The proof of (i) is completed on
taking the infimum over $u^i_{[t_0,\clT]} $ for each $  i=0,1$.


\smallbreak

\def\tilx{\tilde x}
\def\tilu{\tilde u}

We next prove (ii).  It is clear from the definitions that $K^\star(x_\sigma,z_\sigma,t_0) \le J^\star(x,z,t_0)$;    we establish next the reverse inequality, for each $x,z,t_0$.   For  $\delta>0$ fixed,   the following pair of bounds will be established:
\[
\begin{aligned}
K^\star(x_\sigma,z_\sigma,t_0+\delta)& \le   
K^\star(x_\sigma,z_\sigma,t_0)    + O(\delta) ,
\\
J^\star(x,z,t_0)  & \le  K^\star(x_\sigma,z_\sigma,t_0+\delta)  + O(\delta) .
\end{aligned} 
\]
Since $\delta>0$ is arbitrary, these bounds are sufficient to establish (ii). 
The first inequality follows because the cost function is non-negative, so    only the second   requires proof.  
 
Let $ x^\star,  z^\star\in\Re^M$ denote the  optimizers in \eqref{e:V2}, so that in particular  $ x^\star_\sigma= x_\sigma$,  $z^\star_\sigma= z_\sigma$.   
Let $u$ denote the input described in \Cref{t:cheap} with $x^+=x^\star$,  $z^+=z^\star$.   
The resulting state trajectory satisfies the conclusions of  \Cref{t:cheap},   so that in particular $x^\delta(t_0+\delta) = x^\star +O(\delta)$.
We thus obtain the desired bound:
\[
\begin{aligned}
J^\star(x,z,t_0)   & \le     \int_{t_0}^{t_0+\delta}     c(x^\delta(t), z^\delta(t), u^\delta(t)) \, dt  + J^\star(x^\delta(t_0+\delta), z^\delta(t_0+\delta),t_0+\delta)  
\\
& = K^\star(x_\sigma,z_\sigma,t_0+\delta)  + O(\delta).
\end{aligned} 
\]
The first inequality is due to Bellman's principle of optimality, while the second approximation is a consequence of the following:
 (i) the first cost term is bounded by $O(\delta)$ as a consequence of the cheap control input,
 (ii) $J^\star$ is Lipschitz with respect to the state variables (as it is convex), 
 and (iii) the definition of $K^\star$ in  \eqref{e:V2}.
\qedIEEE

\ProofOf{Co-state dynamics}

The dynamics of the dual variables  appearing in \Cref{r:lambda} are given in the following lemma.

 \begin{lemma}
	\label[lemma]{l:lambdai} 
	Let $t_0 \in (0,\clT)$. The optimal input-state  $(\bfnu^\star, \bfmx^\star, \bfmz^\star)$ and dual variables  $\{\lambda^\star,\beta^\star\}$ satisfy the following co-state equations:
	\begin{subequations}
	\begin{align}
	\label{e:Lambdai}
	\ddt \lambda^\star_i(t) &= \alpha_i \lambda^\star_i(t) - c'_i (x^\star(t)),\\
	\label{e:betai}
	\ddt \beta^\star_i(t) &= \lambda^\star_i(t) + c'_g(\ell(t) - z^\star_\sigma(t)), \quad t \in [t_0, \clT] ,
	\end{align}
	\end{subequations}
	with   boundary condition $\lambda^\star_i(\clT) = 0, ~ \beta^\star_i(\clT) = 0$ for each $i$.
\end{lemma}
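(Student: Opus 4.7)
The plan is to derive \eqref{e:Lambdai}--\eqref{e:betai} via Pontryagin's Maximum Principle applied to the state-space reformulation of \eqref{qp19}. After eliminating $g$ and $\gamma$ through \eqref{e:balancecons}--\eqref{e:genrampcons}, the problem becomes a free-endpoint optimal control problem with state $(x,z)\in\Re^{2M}$, control $u\in\Re^M$, dynamics \eqref{e:soccons}--\eqref{e:loadrampcons}, running cost \eqref{e:modCost}, and no terminal cost. The associated Hamiltonian with adjoint variables $(\lambda,\beta)\in\Re^M\times\Re^M$ adjoint to $(x,z)$ is precisely the function $H$ defined in \eqref{e:H}. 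Under Assumption~(A2), the optimal trajectory $(x^\star,z^\star)$ is $C^1$ on $(0,\clT]$, which is the regularity needed to invoke PMP and obtain continuous co-states $(\lambda^\star_i,\beta^\star_i)$ satisfying the standard adjoint equations $\ddt\lambda_i = -H_{x_i}$ and $\ddt\beta_i = -H_{z_i}$. Because the problem has no terminal cost, the transversality conditions at $t=\clT$ yield $\lambda^\star_i(\clT)=\beta^\star_i(\clT)=0$.

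The remaining calculations are routine. From \eqref{e:H} with the cost \eqref{e:modCost}, the running cost contributes $c_i'(x_i)$ to $H_{x_i}$ (since $\cX$ is separable), and the adjoint coupling $\sum_j \lambda_j(-\alpha_j x_j - z_j)$ contributes $-\alpha_i\lambda_i$; neither the $c_g$ term nor $c_{\tilde d}$ depends on $x_i$, so $H_{x_i} = c_i'(x_i) - \alpha_i\lambda_i$, which reproduces \eqref{e:Lambdai}. For the $z_i$-derivative, the only running-cost contribution comes from $c_g(\ell(t)-z_\sigma)$ through $z_\sigma$, giving $-c_g'(\ell(t)-z_\sigma^\star)$, and the adjoint coupling contributes $-\lambda_i$; hence $H_{z_i} = -c_g'(\ell(t)-z_\sigma^\star) - \lambda_i$, reproducing \eqref{e:betai}.

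The principal technical obstacle is that the problem is singular (``cheap'') in $u$: $c_{\tilde d}(u,t)=\kappa[u_\sigma-\ddt\ell(t)]^2$ depends on $u$ only through the scalar $u_\sigma$, so the $H$-minimization over $u$ is degenerate in $M-1$ directions and the classical smooth PMP statement must be handled with care. Crucially, however, the adjoint equations only involve $H_{x_i}$ and $H_{z_i}$, which are unaffected by this singularity, so the forms of \eqref{e:Lambdai}--\eqref{e:betai} are not in question; the singularity only concerns the choice of $u^\star$, which is addressed separately. If a fully rigorous derivation that sidesteps singular-PMP is preferred, one can instead argue via dynamic programming: under (A3), the value function $K^\star$ is $C^1$ and satisfies the HJB equation, and defining $\lambda_i^\star(t) \eqdef K^\star_{1,0,0}(x^\star_\sigma(t),z^\star_\sigma(t),t)$ and $\beta_i^\star(t)\eqdef K^\star_{0,1,0}(x^\star_\sigma(t),z^\star_\sigma(t),t)$ (this identification is formalized in \Prop{r:lambda}), then differentiating these along the optimal trajectory and using the HJB equation together with the $C^2$ regularity of $\bflambda^\star$ from (A4) reproduces the same ODEs, while the boundary conditions follow from $K^\star(\varble,\varble,\clT)\equiv 0$.
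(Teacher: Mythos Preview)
Your proof is correct and follows essentially the same approach as the paper: invoke Pontryagin's minimum principle (relying on (A1)--(A2) for the requisite regularity), read off the adjoint equations $\ddt\lambda_i^\star=-\partial_{x_i}H$ and $\ddt\beta_i^\star=-\partial_{z_i}H$, and obtain the boundary conditions from the absence of a terminal cost. Your explicit computation of the partials and your discussion of the singularity in $u$ (and the alternative HJB route) go beyond what the paper spells out, but the core argument is the same.
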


\begin{proof}
    The state $ (x^\star(t), z^\star(t) )$ is continuously differentiable  and the control $u^\star(t)$ is continuous on $[t_0, \clT]$  as a consequence of (A2). In addition, the dynamics, \Cref{e:soccons,e:loadrampcons}, are linear and hence continuously differentiable with respect to each of the variables. Further, (A1)  implies that the cost function $\bfmath{c}$ in \eqref{e:modCost} is continuously differentiable with respect to all the variables. Consequently, the optimal input-state pair $\bfnu^\star, (\bfmx^\star, \bfmz^\star)$ satisfies Pontryagin's minimum principle on the closed interval $[t_0, \clT]$   \cite[Section 4.2]{lib11}.  The rest of the proof follows from this result. In particular, the   minimum principle implies
\[
\begin{aligned}
\ddt \lambda^\star_i(t)& = - \ddxp H \bigl(x^\star(t), z^\star(t),u^\star(t), \lambda^\star(t),  \beta^\star(t), t) \bigr),
\\
\ddt \beta^\star_i(t) &= - \ddzp H \bigl(x^\star(t), z^\star(t),u^\star(t), \lambda^\star(t),  \beta^\star(t), t) \bigr) 
\end{aligned} 
\]
which
yields \eqref{e:Lambdai}. The boundary conditions  $\lambda^\star_i (\clT)  =  \beta^\star_i (\clT)  = 0$ hold because there is no   terminal cost   \cite[Theorem 1]{vin15}.
 \end{proof}

\ProofOf{Optimal input for the descriptor system}

The next result provides the optimal input $\bfnu^\star_\sigma$ for the optimal descriptor system states $(\bfmx^\star_\sigma, \bfmz^\star_\sigma)$:
\begin{lemma}
	\label[lemma]{l:FB}
	Let $t_0 \in (0,\clT)$.  The optimal input on $[t_0,\clT]$ is
	\begin{equation}
	\label{e:optimalfb}
	u_\sigma^\star(t) = \ddt \ell(t) - \frac{1}{2 \kappa} \beta_i^\star(t), \qquad t \in [t_0, \clT]
	\end{equation}
Consequently,  $\beta^\star(t)\eqdef\beta_i^\star(t)$,   $\lambda^\star(t)\eqdef\lambda_i^\star(t)$ 
are independent of $i$.
\end{lemma}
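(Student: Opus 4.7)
The plan is to apply Pontryagin's minimum principle and exploit the singular (cheap control) structure of the cost on $u$. Recall from \eqref{e:modCost}--\eqref{e:singCost} that the control enters the running cost only through the sum $u_\sigma$, and moreover it enters quadratically via $\kappa[u_\sigma(t)-\ddt\ell(t)]^2$. Together with the Hamiltonian \eqref{e:H}, this means the control appears in $H$ as
\[
\kappa \bigl[u_\sigma(t)-\ddt\ell(t)\bigr]^2 + \sum_i \beta_i(t)\, u_i(t).
\]

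First I would take the partial derivative of $H$ with respect to each $u_i$ at the optimal trajectory and set it to zero, justified because $H$ is strictly convex in $u_\sigma$ (hence jointly convex in $u$, with minima characterized by the first-order condition along the $u_\sigma$ direction) and the minimum principle applies on $[t_0,\clT]$ by the same regularity argument used in \Cref{l:lambdai}. This yields, for every $i$,
\begin{equation}
2\kappa\bigl[u_\sigma^\star(t)-\ddt\ell(t)\bigr] + \beta_i^\star(t) = 0,
\qquad t\in[t_0,\clT].
\label{e:FOCu}
\end{equation}
The key observation is that the left-hand side of \eqref{e:FOCu} depends on $i$ only through $\beta_i^\star(t)$, while the first term is common to all $i$. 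Therefore $\beta_i^\star(t)$ is independent of $i$; denote this common value by $\beta^\star(t)$. Solving \eqref{e:FOCu} for $u_\sigma^\star(t)$ then gives the formula \eqref{e:optimalfb}.

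It remains to show that $\lambda_i^\star(t)$ is also independent of $i$. For this I would invoke the co-state equation \eqref{e:betai} from \Cref{l:lambdai}, rewritten as
\[
\lambda_i^\star(t) = \ddt \beta_i^\star(t) - c_g'\bigl(\ell(t)-z_\sigma^\star(t)\bigr).
\]
Since we have just shown $\beta_i^\star(t)\equiv\beta^\star(t)$ (and hence $\ddt\beta_i^\star(t)$ is independent of $i$), and since the second term on the right-hand side is manifestly independent of $i$, it follows that $\lambda_i^\star(t)$ is independent of $i$; define $\lambda^\star(t)$ to be this common value.

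The only subtle point is the use of the minimum principle over the \emph{closed} interval $[t_0,\clT]$ with $t_0>0$, which is precisely what is needed to treat the initial jump permitted by (A2b); this is already addressed in the proof of \Cref{l:lambdai}, so the argument here is essentially algebraic rather than analytic. The main obstacle I would anticipate is ensuring that the cheap-control structure does not cause $u^\star$ to be non-unique in the individual components $u_i$ (only $u_\sigma$ is pinned down by \eqref{e:FOCu}); but this is fine because the lemma's claim concerns only $u_\sigma^\star$ and the scalar co-states, not the individual inputs.
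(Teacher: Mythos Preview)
Your proposal is correct and follows essentially the same route as the paper: apply the first-order condition $\partial H/\partial u_i=0$ from Pontryagin's minimum principle to obtain \eqref{e:optimalfb} and conclude $\beta_i^\star$ is independent of $i$, then invoke the co-state equation \eqref{e:betai} to deduce the same for $\lambda_i^\star$. The paper's proof is terser but the logic is identical; your added remarks on the cheap-control structure and the non-uniqueness of individual $u_i^\star$ are accurate and do not affect the argument.
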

\begin{proof}
Similar to the proof of \Cref{l:lambdai}, the conclusion   \eqref{e:optimalfb} is a consequence of Pontryagin's minimum principle, and the first-order condition for minimality:
\[
0=	\ddnup H \bigl(x^\star(t), z^\star(t),u^\star(t), {\lambda^\star}(t),  {\beta^\star}(t), t) \bigr)  
\]
This establishes \eqref{e:optimalfb}, which then implies that    $\beta_i^\star(t)  $  is independent of $i$ for each $t$.    
The conclusion that $ \lambda_i^\star(t)$ is also independent of $i$ follows from  \cref{e:betai}. 
	 \end{proof}


The lemma reinforces the co-state collapse identified in \Cref{r:lambda}. 
The proof of the latter is provided next.

\ProofOf{Proof of \Cref{r:lambda}}
The left hand equalities in \eqref{e:lambdas} are a familiar result:  the optimal co-state trajectory is the gradient of the value function with respect to the state variable \cite[Theorem 3.1]{clawin87}. The right-hand equalities are from the definition \eqref{e:lambda}. 
\qedIEEE

\ProofOf{Proof of \Cref{t:main}} 
Equations \eqref{e:SoC} and \eqref{e:contro} are the state equations.
Equations \eqref{e:2dsub} and \eqref{e:betamain} and the final time boundary conditions on the co-state variables follow from \Cref{l:lambdai,l:FB}.
The optimal feedback policy   \eqref{e:rampsum} is obtained from \Cref{l:FB}.
The initial state mapping $\bigl(x(0+), z(0+)\bigr)$ is given by \eqref{e:IC}.  

 As a consequence of (A2) and the fact that the cost functional is strictly convex with respect to the control and that the dynamics are linear, the solution $ (\bfmx^\star, \bfmz^\star, \bflambda^\star, \bfbeta^\star, \bfnu^\star  )$ satisfying Pontryagin's minimum principle is both necessary and sufficient for optimality \cite[Chapter 7]{brepic07}.
 \qedIEEE

\ProofOf{Proof of \Cref{p:conrho}}  
	As $\varrho \in C^1$, it follows from assumption (A1) and \eqref{e:genCost} that $\clL_g, \clL_i \in C^1$. Moreover, $(g^\varrho(t), \dot g^\varrho(t))$ and $(x^\varrho_i(t), \dot x^\varrho_i(t))$ are continuous on the half-open interval $(0, \clT]$. Consequently, the Euler-Lagrange equations form the necessary first-order conditions for weak extrema \cite[Section 2.3.3]{lib11}. 
The solution to the minimization problem in \eqref{e:grid} and \eqref{e:aggregator}  at the stationary minimum are the following Euler-Lagrange equations,
\begin{align*}
	\FRAC{1}{\partial}{\partial g} \clL_g (g^\varrho, \dot g^\varrho, t) - \ddt \FRAC{1}{\partial}{\partial \dot g} \clL_g (g^\varrho, \dot g^\varrho, t) & = 0, 
\\
	\FRAC{1}{\partial}{\partial x_i} \clL_i (x^\varrho_i, \dot x^\varrho_i, t) - \ddt \FRAC{1}{\partial}{\partial \dot x_i} \clL_i (x^\varrho_i, \dot x^\varrho_i, t) & = 0,
\end{align*}
which result in \eqref{e:ELg} and \eqref{e:EL}, respectively. The terminal-time boundary conditions are obtained by, respectively, setting $\FRAC{1}{\partial}{\partial \dot g} \clL_g (g^\varrho, \dot g^\varrho, t)|_{t=\clT} = 0$ and $\FRAC{1}{\partial}{\partial \dot x_i} \clL_i (x^\varrho_i, \dot x^\varrho_i, t)|_{t=\clT} = 0$ \cite[Section 2.3.5]{lib11}.
\qedIEEE

\ProofOf{Proof of \Cref{p:rhoAndLambda}}
With $t \in (0, \clT]$, setting $\varrho(t) = -\lambda^\star(t)$ in \eqref{e:EL} and comparing with \eqref{e:2dsub} yields $x_i^\varrho(t) = x_i^\star(t)$ for each $i$, which is the optimal solution to the primal problem \eqref{qp19}. This implies that there is no duality gap: $-\bflambda^\star$ maximizes the dual functional $\phi^\star$. 
\qedIEEE

\ProofOf{Proof of \Cref{p:avgprice}}
The proof follows from integrating \eqref{e:EL} and \eqref{e:ELg} over $(0, \clT]$ and dividing throughout by $\clT$. In particular, the error terms follow from the fundamental theorem of calculus.
\qedIEEE

 \def\urls#1{{\scriptsize\url{#1}}}
 
\bibliographystyle{IEEEtran}

\def\cprime{$'$}\def\cprime{$'$}

{}
\null  
\null  

\end{document}